\newtheorem{theorem}{Theorem}
\newtheorem{proof}{Proof}
\definecolor{LightBlue}{RGB}{0,111,191}
\begin{document}
	\title{Attention-based SIC Ordering and Power Allocation for Non-orthogonal Multiple Access Networks}
	
	\author{Liang~Huang,~\IEEEmembership{Senior Member,~IEEE,}
		Bincheng~Zhu,
		Runkai~Nan,
		Kaikai~Chi,~\IEEEmembership{Senior Member,~IEEE,}
		Yuan~Wu,~\IEEEmembership{Senior Member,~IEEE}
		\IEEEcompsocitemizethanks{\IEEEcompsocthanksitem L. Huang, B. Zhu, R. Nan, and K. Chi are with the School of Computer Science and Technology, Zhejiang University of Technology, China (e-mail: \{lianghuang, bczhu, rknan, kkchi\}@zjut.edu.cn).\protect\\
			\IEEEcompsocthanksitem Y. Wu is with the State Key Laboratory of Internet of Things for Smart City, University of Macau, Macau, China, and also with the Department of Computer and Information Science, University of Macau, Macau, China (e-mail: yuanwu@um.edu.mo).
		}
	}

	\IEEEtitleabstractindextext{
		\begin{abstract}
			Non-orthogonal multiple access (NOMA) emerges as a superior technology for enhancing spectral efficiency, reducing latency, and improving connectivity compared to orthogonal multiple access. In NOMA networks, successive interference cancellation (SIC) plays a crucial role in decoding user signals sequentially. The challenge lies in the joint optimization of SIC ordering and power allocation, a task made complex by the factorial nature of ordering combinations. {This study introduces an innovative solution, the Attention-based SIC Ordering and Power Allocation (ASOPA) framework, targeting an uplink NOMA network with dynamic SIC ordering. ASOPA aims to maximize weighted proportional fairness by employing deep reinforcement learning, strategically decomposing the problem into two manageable subproblems: SIC ordering optimization and optimal power allocation. Our approach utilizes an attention-based neural network, which processes instantaneous channel gains and user weights to determine the SIC decoding sequence for each user. A baseline network, serving as a mimic model, aids in the reinforcement learning process. Once the SIC ordering is established, the power allocation subproblem transforms into a convex optimization problem, enabling efficient calculation of optimal transmit power for all users. 
			Extensive simulations validate ASOPA's efficacy, demonstrating a performance closely paralleling the exhaustive method, with over 97\% confidence in normalized network utility. 
			{
			Compared to the current state-of-the-art implementation, i.e., Tabu search, ASOPA achieves over 97.5\% network utility of Tabu search. Furthermore, ASOPA is two orders magnitude less execution latency than Tabu search when $N=10$ and even three orders magnitude less execution latency less than Tabu search when $N=20$.}
			Notably, ASOPA maintains a low execution latency of approximately 50 milliseconds in a ten-user NOMA network, aligning with static SIC ordering algorithms. 
			Furthermore, ASOPA demonstrates superior performance over baseline algorithms besides Tabu search in various NOMA network configurations, including scenarios with imperfect channel state information, multiple base stations, and multiple-antenna setups. 
			Such results underscore ASOPA's robustness and effectiveness, highlighting its ability to excel across various NOMA network environments. The complete source code for ASOPA is accessible at https://github.com/Jil-Menzerna/ASOPA.}
		\end{abstract}
		\begin{IEEEkeywords}
			non-orthogonal multiple access (NOMA), successive interference cancellation (SIC), deep reinforcement learning (DRL), resource allocation.
	\end{IEEEkeywords}}
	
	\maketitle
	
	\IEEEdisplaynontitleabstractindextext
	
	\IEEEpeerreviewmaketitle
	
	\section{Introduction}
	\IEEEPARstart{W}{ith} the rapid development of online gaming, augmented and virtual reality, 3-dimensional media, and Internet of Things, wireless network traffic has increased significantly, which is a challenge for orthogonal multiple access (OMA) schemes. 
	{
	To meet the growing demand, the next generation of wireless networks exploits advanced multiple access technologies \cite{yang2021MWC, Liu2022,zakeri2021robust}, including the non-orthogonal multiple access (NOMA) \cite{LinZ2019NOMA,DM2020} and the rate-splitting multiple access (RSMA) \cite{LinZ2021RSMA}. }
	Combining NOMA with other technologies, such as cognitive radios, unmanned aerial vehicles, mobile edge computing, simultaneous wireless information and power transfer, etc., can bring considerable advantages \cite{akbar2021noma,WuY2022}. Specifically, NOMA gives improved spectral efficiency, energy efficiency, higher data rates, massive connectivity, and diversity of wireless service \cite{maraqa2020survey, vaezi2019non}. 
	
	For the uplink power-domain NOMA network, users simultaneously transmit their data over the same frequency resource, so there is inter-user interference due to the broadcast and superposition nature of the wireless medium \cite{islam2016power,ZhaiD2018}. 	
	At the base station (BS) of uplink NOMA-based networks, the successive interference cancellation (SIC) technique decodes different users' messages from the received signal. Using SIC, the BS sequentially decodes users' messages according to a particular order. When decoding a user's message, the remaining undecoded signal is treated as interference. After decoding, a user's message will be subtracted from the received signal.
	The procedure continues until all users' messages are decoded according to a specific SIC order.
	Although different SIC orderings generate the same sum throughput of the NOMA network with a single BS \cite{chi2019energy}, they affect the throughput of each user \cite{huang2022tvt, you2020note,zakeri2019joint,cui2018application}. The earlier a user's signal is decoded, the more substantial interference it suffers. 
	When the quality of service of an individual user matters, it is necessary to optimize the SIC order for better performance metrics, e.g., outage probability \cite{gao2017theoretical,jiang2018sic,gao2018analysis}, latency \cite{qian2019optimal}, and energy consumption \cite{ding2020unveiling2, ZhangL2022}.
	
	For a NOMA network, the joint optimization of SIC ordering and resource allocation is a Non-deterministic Polynomial (NP) hard problem \cite{qian2019optimal}. 
	Many researchers decomposed the joint optimization into SIC ordering optimization and resource allocation.
	For SIC ordering, the total number of decoding orderings is the factorial of the number of users. The exhaustive method obtains the optimal SIC ordering by enumerating all possible SIC orderings and is limited to small-scale scenarios, i.e., with five users\cite{hu2019joint}.
	To tradeoff performance and computational complexity, there are two types of heuristic methods for SIC ordering in many works, i.e., static heuristic methods and dynamic heuristic methods. 
	Some researchers \cite{duan2019resource,pan2017resource,sun2018robust,gao2017theoretical,you2020note,gao2018analysis,chen2014evaluations, Lai2022, Cui2023,yakou2015downlink,jiang2018sic,wang2021tcomm} adopted a static SIC ordering order with respect to a single metric and optimized the resource allocation when considering specific NOMA-based wireless networks. The execution latency of static SIC ordering methods is very low, which is close to the conventional SIC ordering algorithms, i.e., the descending order and ascending order of channel quality.
	However, these static SIC ordering methods can not cope with complex NOMA wireless scenarios, as finding the corresponding single metric is problematic.
	Some other works \cite{qian2019optimal,LiuZ2024,QianL2024,zakeri2019joint,zakeri2021robust} tried to iteratively search for the SIC ordering, e.g., greedy insertion and linear relaxation.
	While the dynamic heuristic methods apply to most NOMA wireless scenarios and achieve better performance compared to static SIC ordering algorithms, they still have high complexity in the joint optimization problems due to repeatedly optimizing the resource optimization.
	
	Recently, deep learning has emerged as a promising approach for making near-optimal decisions. With a large labelled dataset, it can achieve substantial performance through supervised learning. However, in the joint optimization of SIC ordering and resource allocation, obtaining the optimal solution labels is challenging, particularly in large-scale NOMA wireless networks.
	In this regard, deep reinforcement learning (DRL) is a suitable technique that can train a model using a dataset without labels. Some recent studies have utilized DRL techniques to efficiently solve computation-intensive resource allocation problems in wireless networks, such as Deep Q-Network \cite{huang2019dcn,wu2021hybrid}, actor-critic algorithm \cite{zhou2022iotj,tuli2020dynamic}, deep deterministic policy gradient \cite{ZouY2022, ZhaoT2022}, and proximal policy optimization \cite{samir2021optimizing,li2020trajectory}.
	{Recently, a pioneering deep reinforcement learning-based framework named DROO is introduced to address the hybrid integer-continuous challenge in mobile edge computing \cite{huang2019deep}. DROO ingeniously splits the primary optimization issue into two subproblems: a zero–one binary offloading decision and a continuous resource allocation task. These subproblems are then individually managed through a model-free learning module and a model-based optimization module, respectively \cite{Zhang2021}. Nevertheless, DROO and its subsequent iterations \cite{BiS2021,Li2022}, are constrained by their reliance on quantization modules that exclusively produce binary decisions. This limitation becomes particularly evident in their inability to permute SIC ordering for NOMA networks. The permutation complexity for SIC orderings is factorial, presenting a significantly more challenging scenario than binary decisions. This complexity forms the basis of our motivation to develop a solution that adeptly handles the joint SIC ordering and power allocation problem, aiming to achieve near-optimal performance. This is particularly crucial in meeting the real-time requirements of NOMA networks, where efficiently managing the factorial complexity is essential for optimal system operation.}
	
	In this paper, we consider the uplink NOMA with different weighted users. Since transmit power of the NOMA wireless network is typically shared in a best-effort fashion, we aim to ensure fairness across multiple users. We optimize the SIC ordering and users' transmit power to maximize the weighted proportional fairness function. To tackle this challenging problem, a novel Attention-based SIC ordering and power allocation (ASOPA) framework is proposed, which leverages both DRL and optimization theory. { To assess the effectiveness of ASOPA, we conduct comparative analyses against a range of baseline algorithms. These comparisons focus on two key metrics: the network utility achieved and the execution duration of ASOPA. Furthermore, to demonstrate the wide-ranging applicability and versatility of ASOPA, the ASOPA is applied with various NOMA network configurations. This extension effectively highlights ASOPA's adaptability across diverse network structures within the NOMA framework, underlining its robustness and practical utility in diverse network conditions.
	}
	
	Our main contributions can be summarized as follows:	
	\begin{itemize}

		\item
		We formulate a joint optimization problem of SIC ordering and power allocation to maximize the weighted proportional fairness on an uplink NOMA wireless network. To solve this problem efficiently, it is decomposed into two subproblems: a SIC ordering subproblem and a power allocation subproblem under the given SIC ordering. This decomposition approach enables us to leverage DRL and convex optimization for optimal performance.
		
		\item 
		We propose the ASOPA framework to solve the joint optimization problem. This framework comprises three components: an attention-based actor network, a convex optimization module, and a baseline network. The actor network generates the SIC ordering, while the optimization module allocates users' transmit power. The baseline network is used to train and reinforce the actor network. The ASOPA framework is designed to generate feasible solutions that meet all physical constraints, ensuring optimal performance. 
		
		\item
		{We conduct comprehensive numerical experiments to validate the effectiveness of the ASOPA framework. The findings reveal that ASOPA attains near-optimal performance, fulfilling the real-time demands of NOMA networks. Notably, the normalized network utility achieved by ASOPA has a confidence interval exceeding 99\%, closely mirroring the performance of exhaustive methods. In terms of execution latency, ASOPA is on par with static SIC ordering algorithms, with an approximate duration of 50 ms in a ten-user NOMA network. To highlight ASOPA's broad applicability and versatility, we extend its application to include scenarios with imperfect channel state information (CSI), networks comprising multiple BSs, and systems with multiple-antenna setups. In each of these diverse environments, ASOPA consistently outperforms the baseline algorithms, showcasing its robustness and effectiveness across different NOMA network configurations.}
		
	\end{itemize}
	
	The rest of this paper is organized as follows. The related work is introduced in Section \ref{Sec:RelatedWork}. 
	Section \ref{sec:model} gives the system model and problem formulation. 
	Section \ref{sec:DRL} introduces ASOPA. 
	The numerical results are given in Section \ref{sec:NumResults}.
	Finally, Section \ref{sec:conclusion} concludes the paper.

	\section{Related Work} \label{Sec:RelatedWork}
	Most of the existing works on NOMA use fixed SIC ordering according to channel conditions. Specifically, the descending order of channel quality is usually used in uplink NOMA \cite{duan2019resource,wang2021tcomm}, and the ascending order of channel quality is generally used in downlink NOMA \cite{pan2017resource,sun2018robust}. However, in many scenarios, using the above fixed ascending or descending order for SIC might not be optimal. To achieve optimal performance, \cite{hu2019joint} used the exhaustive search to find the optimal decoding order. However, the computational complexity of the exhaustive search is at least $O(N!)$, and its usage is limited to small-scale NOMA wireless scenarios, i.e., less than five users.
	To balance performance and computational complexity, the following works further optimized the SIC ordering in a NOMA network, which affects different performance metrics, e.g., outage probability\cite{gao2017theoretical, jiang2018sic, gao2018analysis}, throughput\cite{yakou2015downlink,huang2022tvt, you2020note, zakeri2019joint, cui2018application},  latency\cite{qian2019optimal}, energy consumption\cite{ding2020unveiling2, ZhangL2022} and the data rate of a particular user\cite{hu2019joint}. After elaborately designing the SIC ordering algorithm, Qian {\em et al.} \cite{qian2019optimal} showed that by optimizing the SIC ordering, the min-max execution latency could be reduced by ten times compared to the best comparison method. Also, \cite{zakeri2019joint} showed that optimizing the SIC ordering can get a 48\% improvement in the sum rate over the fixed SIC ordering. The above work of SIC ordering optimization can be classified into two types: static SIC ordering and dynamic SIC ordering.
	
	\subsection{Static SIC Ordering}
	In addition to wireless channel quality, some works have proposed using a static decoding order based on other performance metrics specific to certain problems and scenarios. For example, the descending order of received user signal power \cite{gao2017theoretical,you2020note,gao2018analysis}, the descending order of predicted user throughput \cite{chen2014evaluations}, the decreasing order of channel gain normalized by noise and interference power \cite{Lai2022, Cui2023}, the ascending order of average channel gain from user to the base station \cite{yakou2015downlink}, and the ascending order of a user's maximum secrecy throughput \cite{jiang2018sic}.
	For the uplink NOMA scenario, \cite{jiang2018sic} adopted the ascending order of user's maximum secrecy throughput as the SIC ordering to achieve secrecy transmission in eavesdropper scenarios. \cite{gao2017theoretical} used the descending order of received user signal power as the SIC ordering and derived the closed-form expression of the outage probability for a single base station and three-user system. Building on this work, \cite{gao2018analysis} considered the single base station and multiple active user scenario in the case of imperfect channel state information, using the descending order of estimated instantaneous received signal power as the SIC ordering.
	Although these static SIC ordering methods can achieve excellent performance with small execution latency, they may suffer considerable performance degradation on complex NOMA networks due to the difficulty of finding a suitable metric for ordering.
	
	\subsection{Dynamic SIC Ordering}
	In addition to works that use a static SIC ordering based on specific problems and scenarios, a few studies have proposed iteratively searching heuristic algorithms to optimize the SIC ordering.
	For instance, \cite{zakeri2021robust} and \cite{zakeri2019joint} introduced binary variables to represent the SIC ordering, solving the problem via variable relaxation with compensation for performance degradation and as an integer linear programming problem, respectively.
	{
	\cite{LiuZ2024} adapt a permutation-based genetic algorithm to optimize the SIC ordering.
	\cite{qian2019optimal} used the greedy meta-scheduling technique to develop a low-complexity and easy-to-implement SIC ordering algorithm. This algorithm sequentially inserts users into the existing ordering, tries every possible insertion position for each user, and chooses the position that provides the most significant benefit.
	\cite{QianL2024} utilized a heuristic tabu search to optimize the SIC ordering in an iterative process. At each iteration, tabu search swapped the SIC ordering of any two users and selected the best one. To the best of our knowledge, \cite{QianL2024} is the state-of-the-art algorithm for dynamic SIC ordering, it achieves the near-optimal performance, but suffers from high computational complexity due to the large number of iterations.}
	While these dynamic SIC ordering methods apply to most NOMA wireless scenarios and achieve better performance than static SIC ordering, they all involve iterative updates and repeatedly solve resource allocation subproblems, resulting in high computational complexity. 

	\section{System model and problem formulation} \label{sec:model}
	\subsection{System Model}\label{342}
	As shown in Fig. \ref{fig:systemmodel}, we consider an uplink NOMA network with a central-located BS and $N$ active users, denoted as a set $\mathcal{N}=\{1,2,\dots, N\}$, where each user has a single antenna. Users have a stable power supply, and all $N$ users simultaneously transmit their information to the BS by NOMA.
	\begin{figure}
		\centering
		\includegraphics[width=1\linewidth]{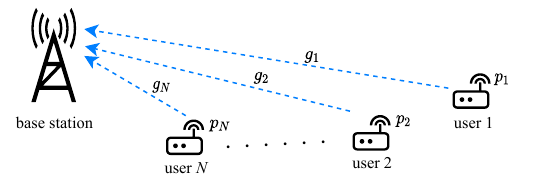}
		\caption{An uplink NOMA network with one BS and $N$ users.}
		\label{fig:systemmodel}
	\end{figure}
	
	The system time is divided into consecutive slots of equal length, smaller than the channel coherence time. We assume that the wireless channel gain is constant in each time slot and may vary across different slots. Without loss of generality, the slot length is normalized for brevity. 
	
	The BS employs SIC in a successive order to decode users' signals.
	{ In our framework, we define a function $\xi(n) = i$ and its inverse $\pi(i) = n$, which establish a mapping between a user's index $n$ and its corresponding decoding order $i$. For instance, $\xi(3) = 2$ indicates that the user~3 is decoded second in the sequence. Conversely, $\pi(2) = 3$ indicates that the second user to be decoded in the order is user~3. This bi-directional mapping functions serve to clearly delineate the relationship between the decoding sequence and the specific users in the network.} Hence, the signal-to-interference-plus-noise ratio of the user $n$ can be expressed as
	\begin{equation}
		\phi_n=\frac{p_ng_n}{\sum\limits_{\xi\left(n'\right)>\xi\left(n\right),\forall n' \in\mathcal{N}}{p_{n'}g_{n'}}+N_0},
	\end{equation}
	where $N_0$ is the power of the additive Gaussian noise at the BS, $g_n$ denotes the wireless channel gain between user $n$ and the BS at a tagged time slot, and $p_n$ denotes the transmit power of user $n$ sending its information. So the data rate of user $n$ can be expressed as 
	\begin{equation}
		R_n=B\log_2(1+\phi_n),
	\end{equation}
	where $B$ denotes the communication bandwidth.
	
	We summarize essential notations used throughout this paper in Table~\ref{table:notations}.
	\begin{table}
		\setlength{\abovecaptionskip}{0cm}
		\setlength{\belowcaptionskip}{0.1cm}
		\renewcommand{\arraystretch}{1.3}
		\centering
		\caption{Notations}\label{table:notations}
		
		\begin{tabular}{cm{6.5cm}}
			\hline		
			Notation & Definition \\
			\hline
			$N$&The number of users  \\
			\hline
			$\mathcal{N}$ &The set of users\\
			\hline
			$g_n$&The wireless channel gain between the user $n$ and the BS\\
			\hline
			$p_n$&The transmit power of  user $n$\\
			\hline
			$\mathbf{p}$&The vector representation of the power allocation\\
			\hline
			$P_n^{max}$&The maximum power that user $n$ can achieve\\
			\hline
			$\bm{\pi}$&The SIC ordering of all users\\
			\hline
			$\bm{\pi}^{BL}$&The SIC ordering generated by the baseline network\\
			\hline
			$\Pi$&The set of all possible SIC orderings\\
			\hline
			$\xi\left(n\right)$&The order of user $n$ to be decoded\\
			\hline
			$N_0$&The power of the additive Gaussian noise at the BS\\
			\hline
			$\phi_n$&The signal-to-interference-plus-noise ratio of the user $n$\\
			\hline
			$B$&The communication bandwidth\\
			\hline
			$R_n$&The data rate of user $n$\\

			\hline
			$w_n$&The weight of user $n$\\
			
			\hline
			$\mathbf{X}$&The representation of all users' information\\
			\hline
			$\theta$&The parameters of the actor network\\
			\hline
			$\theta'$& The parameters of the baseline network\\

			\hline
			$\mathbf{E}$&The embedding of all users generated by the encoder\\
			
			\hline
			$\overline{\mathbf{e}}$&The global information embedding which is the mean of $\mathbf{e_n},\forall n\in\mathcal{N}$ \\
			\hline
			$\bm{\ell}_t$&The probability of users being selected at iteration $t$\\
			\hline
			$\mathbf{q}, \mathbf{K}, \mathbf{V}$ & The query, key, and value\\
			\hline
			$d_k$&The dimension of $\mathbf{q}$ and $\mathbf{k}$\\
			\hline
			$d_e$ & The dimension of each user's embedding $\mathbf{e}_n$ \\
			\hline
			$S\left({\theta}|\mathbf{X} \right)$&The expected objective value for input $\mathbf{X}$ under the network parameters $\theta$\\
			\hline
			$z_\theta\left(\bm{\pi}|\mathbf{X}\right)$&The probability of $\bm{\pi}$ generated by the actor network $\theta$ for the input $\mathbf{X}$\\
			\hline
			$R(\bm{\pi}|\mathbf{X})$&The network utility under the given SIC ordering $\bm{\pi}$ for $\mathbf{X}$\\
			\hline
			$\left | \bm{\tau} \right | $&The batch size\\
			\hline
			$\tau$ &The index of sample in a training batch\\
			\hline
			$\bm{\tau}$& The set of training batch\\
			\hline
		\end{tabular}
	\end{table}
	
	\subsection{Problem Formulation}
	This paper aims to achieve weighted proportional fairness across multiple users. Proportional fairness can be achieved by maximizing individual rates with a logarithmic utility function \cite{Kelly1998}. In addition, considering users' different priorities, individual weights are assigned to each user to achieve weighted proportional fairness \cite{LiW2014,ChenL2019,IHhou2014}. Thus, the aim is to maximize the weighted sum of the logarithmic throughput of different users by jointly optimizing the SIC ordering $\bm{\pi}$ and the power allocation $\textbf{p}$, denoted as the network utility $R(\bm{\pi},\textbf{p})$. This optimization problem can be expressed mathematically as:
	\begin{subequations}
		\begin{align}
			\textbf{P0}:R(\bm{\pi},\textbf{p}) = \max\limits_{\mathbf{\bm{\pi}},\textbf{p}}\ &\sum\limits_{n=1}^{N}{w_n\ln R_n}\label{P0_objective}\\
			s.t.\ &0<p_n\le P_n^{max},\forall n \in\mathcal{N}\label{P0_constraint_power},\\
			&\bm{\pi}\in\Pi\label{P0_constraint_sicorder},
		\end{align}
	\end{subequations}
	where $w_n$ is the weight of user $n$. $\bm{\pi}=[\pi_1, \pi_2,\cdots,\pi_N]$ indicates the SIC order, 
	{where we denote ${\pi _i}=\pi(i)$ for brevity. }
	${\Pi}$ is the permutation set of all possible SIC orderings with size factorial $N$, represented as $N!$. $\textbf{p}=[p_1,p_2,\dots,p_N]$ is the power allocation. \eqref{P0_constraint_power} is the power constraint for each user $n$, where $P_n^{max}$ is the maximum power that user $n$ can achieve. \eqref{P0_constraint_sicorder} is the constraint for $\bm{\pi}$.

	The problem \textbf{P0} involves combinatorial optimization and continuous numerical optimization, which is NP-hard. 
	To effectively solve the problem \textbf{P0}, we decompose it into the SIC ordering optimization and the optimization of power allocation under the given SIC ordering, as shown in Fig.~\ref{fig:two-level}:
	\begin{figure}
		\centering
		\includegraphics[width=1\linewidth]{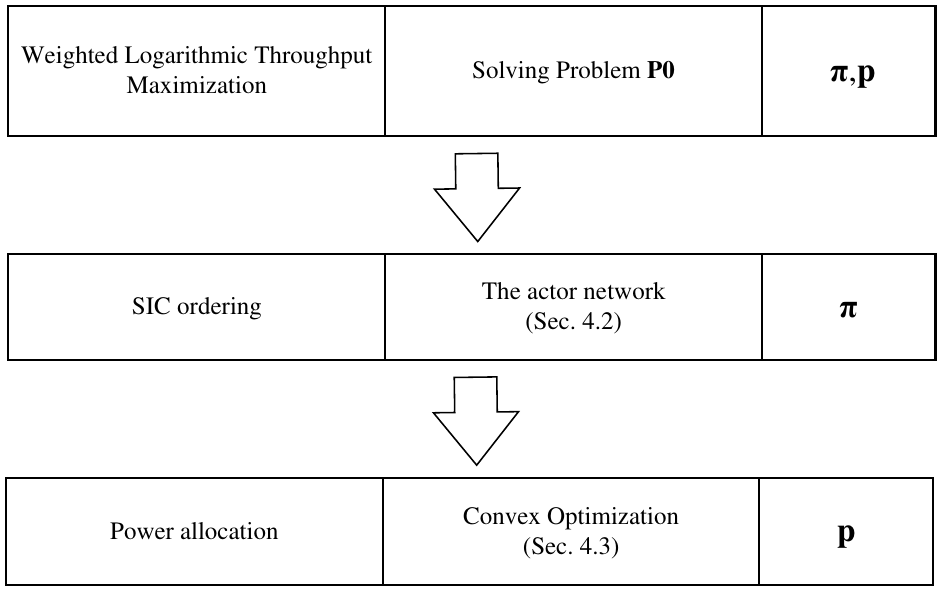}
		\caption{The two-level optimization structure of solving problem \textbf{P0}.}
		\label{fig:two-level}
	\end{figure}
	
	\begin{itemize}
		\item \textit{SIC Ordering}: 
		It is computationally expensive to iteratively search for the optimal SIC ordering from $\Pi$ at the $N!$ scale. We tell that one SIC ordering outperforms another one by solving the power allocation problems \textbf{P1} and comparing their utilities $R(\bm{\pi})$. However, classical comparison-based sorting algorithms cannot perform better than $O(n \log n)$ on average \cite{cormen2022introduction}, which requires repeatedly solving \textbf{P1}, resulting in long execution latency. In this paper, the deep reinforcement learning method is adopt to generate the SIC ordering before the power allocation.
		
		\item \textit{Power Allocation}: 	When the SIC ordering $\bm{\pi}	$ is determined, we only need to solve the power allocation $\mathbf{p}$, as follows:
		\begin{subequations}
			\begin{align}
				\textbf{P1}:R(\bm{\pi})=\max\limits_{\textbf{p}}\ &\sum\limits_{n=1}^{N}{w_n\ln{R_n}}\label{P1_objective}\\
				s.t.\ &0< p_n\le P_n^{max},\forall n \in\mathcal{N}\label{P1_constraint_power}.
			\end{align}
		\end{subequations}
		
		We can solve this power allocation sub-problem \textbf{P1} by converting it to a convex problem and using the inter-point method.
	\end{itemize}
	
	In the next Section, these two subproblems are solved by taking advantage of DRL and convex optimization.
	
	\section{Algorithm Design}  \label{sec:DRL}
	\begin{figure*}
		\centering
		\includegraphics[width=1\linewidth]{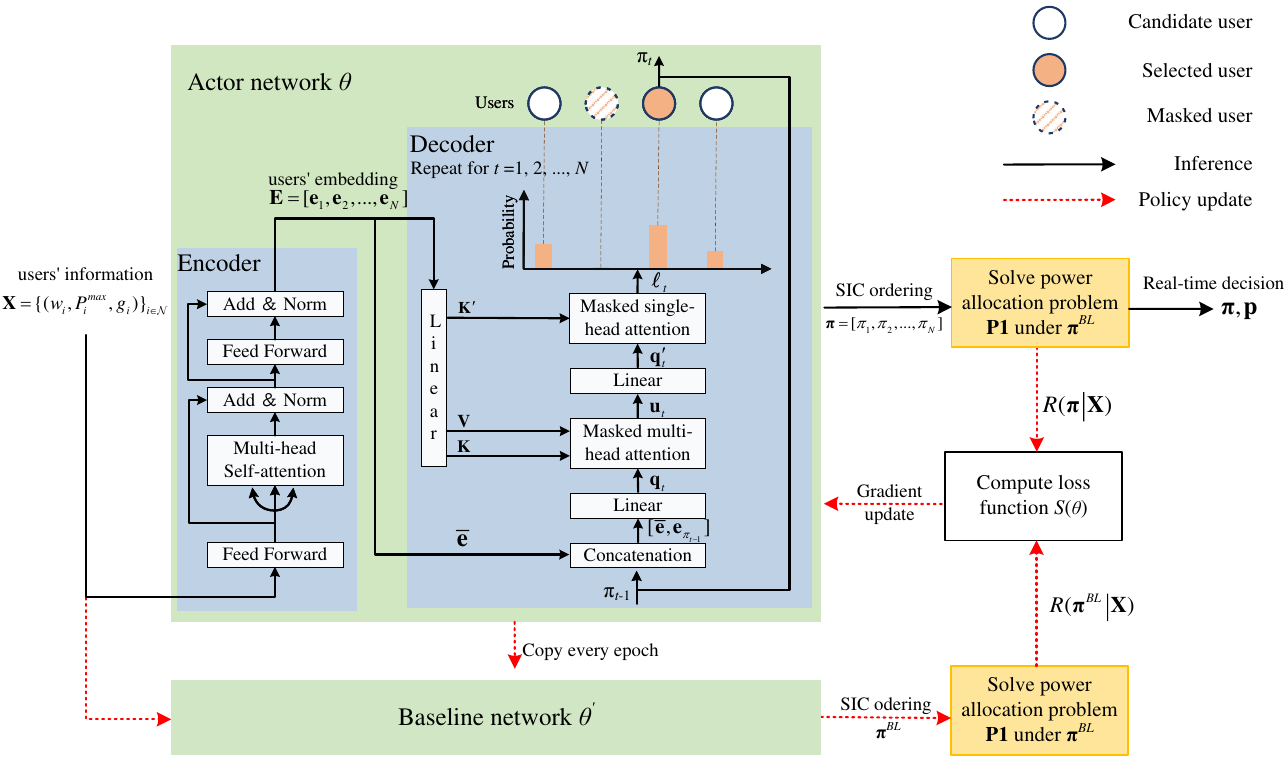}
		\caption{The schematics of the proposed ASOPA framework.}
		\label{fig:algorithm_overview}
	\end{figure*}
	\subsection{Algorithm Overview}	
	Fig.~\ref{fig:algorithm_overview} shows the schematics of the proposed ASOPA framework. It uses an encoder-decoder-based actor network $\theta$ to generate the SIC ordering sequentially and uses optimization techniques to optimize all users' transmit power. The design of the actor network follows from the pointer network \cite{vinyals2015pointer,kool2018attention} for routing problems.
	{
		In Fig.~\ref{fig:algorithm_overview}, the black solid lines depict the inference process of ASOPA, which necessitates real-time network parameters. These parameters include each user's weight $w_n$, maximum power $P_n^{max}$, and channel gain $g_n$. The complete set of users' information, denoted by $\mathbf{X}=\{ (w_i,P_i^{max}, g_i)\}_{i\in \mathcal{N}}$, is fed into the actor network to generate the SIC ordering $\bm{\pi}$.
		Following the determination of the SIC ordering $\bm{\pi}$, we solve the sub-problem \textbf{P1} for the optimal power allocation $\mathbf{p}$ through convex optimization. Then, ASOPA outputs the network decision, represented as \( (\bm{\pi}, \mathbf{p}) \), based on the instantaneous user information \( \mathbf{X} \).}

	{
		The red dotted lines in Fig.~\ref{fig:algorithm_overview} illustrate the policy update process in ASOPA. We employ a replica of the actor network, referred to as the baseline network $\theta'$, and train the actor network using the REINFORCE algorithm. Each users' information $ \bf{X}$ is fed into both the actor and baseline networks. This process yields the output SIC orderings $\bm{\pi}$ and the baseline SIC orderings $\bm{\pi}^{BL}$. Subsequently, \( R\left( {\bm{\pi } |{\bf{X}}} \right)\) and \( R\left( \bm{\pi } ^{BL}|\bf{X} \right) \) are obtained by solving problem \textbf{P1}, which are used to compute the loss function. The backpropagation method is employed to update the parameters \( \theta \) of the actor network. The procedures of ASOPA are detailed in the following subsections.
	}

	\subsection{SIC Ordering}\label{sec_sic_ordering_selection_network}
	As shown in Fig.~\ref{fig:algorithm_overview}, the SIC ordering $\bm{\pi}$ is generated by the actor network composed of an encoder and a decoder as follows:
	{\begin{enumerate}
			\item{The encoder} takes users' information $\mathbf{X}=[\mathbf{x}_1,\mathbf{x}_2,\dots,\mathbf{x}_N]$ as input and outputs users' embedding $\mathbf{E}=[\mathbf{e}_1,\mathbf{e}_2,...,\mathbf{e}_N]$ using self-attention layers. The global embedding ${\overline{\mathbf e}}$ is then calculated as the average of $\mathbf{E}$.
			\item{The decoder} generates the SIC ordering in an iterative process. At each iteration $t$, by utilizing the cross-attention layers, the decoder generates the probability of all users according to $[{\overline{\mathbf e}},{\mathbf{e}}_{\pi_{t-1}}]$ and $\mathbf{E}$. By masking the previously selected users, the probability of the remaining users being selected is calculated using the softmax function, allowing the decoder to determine the current user's index \(\pi_t\). At the end of each iteration, the decoder updates users selected for masking and takes ${\mathbf{e}}_{\pi_t}$ as input to next iteration. It iterates $N$ times to obtain the complete SIC ordering $\bm{\pi}=[\pi_1,\pi_2,...,\pi_N]$.
	\end{enumerate}
}

	
	{The integration of an attention scheme in the encoder and an iterative decoding scheme in the decoder empowers ASOPA to effectively manage a varying number of users in NOMA networks. This approach ensures adaptability and responsiveness to user dynamics, maintaining optimal performance across diverse network scenarios.}

	\subsubsection{Encoder}
	{In this section, we describe how the encoder maps user information $\bf{X}$ through the neural network into users' embedding $\bf{E}$ suitable for subsequent processing.}
	Users' information $\mathbf{X}=[\mathbf{x}_1,\dots,\mathbf{x}_N]$ is first be expanded into $d_e$ dimensions by a fully connected feed forward ($\text{FF}_1$) layer, and then it passes a multi-head attention (MHA) layer and a feed forward ($\text{FF}_2$) layer. Both MHA and $\text{FF}_2$ layer have a residual connection and are followed by batch normalization in \cite{kool2018attention}. Hence, we have:
	\begin{eqnarray}
		\begin{split}
			&\hat{\mathbf{E}} =\text{BN}\bigg(\text{FF}_1\big(\mathbf{X}\big)+\text{MHA}\Big(\text{FF}_1\big(\mathbf{X}\big)\Big)\bigg)  \\
			&\mathbf{E}=\text{BN}\bigg(\hat{\mathbf{E}}+\text{FF}_2\big(\hat{\mathbf{E}}\big)\bigg).
		\end{split}
	\end{eqnarray}
	The details of the multi-head attention mechanism are shown in Appendix~A.
	{ The length of users' embedding is adaptive to the variable number of users. Each term in the obtained users' embedding $\mathbf{E}=[\mathbf{e}_1,\dots,\mathbf{e}_N]$ takes into account all the users' information.}

	\subsubsection{Decoder}
	{The decoder iteratively generates the users' SIC ordering by utilizing the user embeddings obtained from the encoder.}
	In the decoder, the user embedding \( \mathbf{E} \) is initially passed through a linear layer to derive \( \mathbf{K} \), \( \mathbf{V} \), and \( \mathbf{K}' \) as follows:
	\begin{equation}
		\label{equation_fixedinput}
		\mathbf{K} = \mathbf{W}^{K}\mathbf{E}, \quad
		\mathbf{V} = \mathbf{W}^{V}\mathbf{E}, \quad
		\mathbf{K}' = \mathbf{W}^{{K}'}\mathbf{E},
	\end{equation}
	where \( \mathbf{W}^{K} \), \( \mathbf{W}^{V} \), and \( \mathbf{W}^{{K}'} \) are matrices of learnable parameters. {The global embedding \( \overline{\mathbf{e}} = \frac{1}{N}\sum_{n=1}^{N}\mathbf{e}_n \) is computed to effectively capture the overall network state. The derived values of \( \mathbf{K} \), \( \mathbf{V} \), and \( \mathbf{K}' \), along with \( \overline{\mathbf{e}} \), are then utilized in subsequent iterations of the decoder.}
	
	As shown in Fig.~\ref{fig:algorithm_overview}, the decoder performs in an iteration mode and generates an $N$ users' SIC ordering $\bm{\pi}=[\pi_1,\pi_2,...,\pi_N]$.
	For each iteration $t \in \left[1,2,...,N\right]$, the decoder takes the last decoded user's index $\pi_{t-1}$ and decides the current decoded user's index $\pi_{t}$.
	
	Firstly, the concatenation module concatenates an input vector $\left[\overline{\mathbf{e}}, \mathbf{e}_{\pi_{t-1}}\right]$  that contains the global information and the information of the previously decoded users.  $\mathbf{e}_{\pi_{t-1}}$ denotes the previous decoded user's embedding. {The embedding $\mathbf{e}_{\pi_{t-1}}$ of the input vector changes with iteration, which captures the user preferences on the SIC ordering.} When decoding the first user with no previous decoded user,  $\mathbf{e}_0$ is set as the learnable parameter vector. 
	
	Secondly, the masked multi-head attention layer generates a feature vector $\mathbf{u}_t$ that involves the query, keys, and values initially introduced in \cite{vaswani2017attention}.  The keys $\mathbf{K}$ and values $\mathbf{V}$ are obtained in (\ref{equation_fixedinput}), while the single query is variable input computed as follows: 
	\begin{equation}
		\label{equation_linear}
		\mathbf{q}_t=\mathbf{W}^{Q} \left[\overline{\mathbf{e}}, \mathbf{e}_{\pi_{t-1}}\right],
	\end{equation}
	where $\mathbf{W}^{Q}$ are learnable parameters matrices. The multi-head attention mechanism is described in Appendix A and omitted for brevity.
	Then the output of the masked multi-head attention layer can be calculated as
	\begin{equation}
		\mathbf{u}_t= \text{softmax}\left(\text{mask}\left(\tfrac{{\mathbf{q}_t}^{T}\mathbf{K}}{\sqrt{d_k}}\right)\right)\mathbf{V},
	\end{equation}
	where $d_k$ is the dimension of $\mathbf{q}$, the softmax function can refer to equation~(4) of Appendix~A, and the mask function can be expressed as:
	\begin{equation}
		\text{mask}\left(\tfrac{{\mathbf{q}_t}^{T}\mathbf{k}_i}{\sqrt{d_k}}\right) = \begin{cases}
			-\infty & \text{ if } i\in\{\pi_1,\pi_2,...,\pi_{t-1}\}, \\
			\frac{{\mathbf{q}_t}^{T}\mathbf{k}_i}{\sqrt{d_k}} & \text{ otherwise },
		\end{cases}
	\end{equation}
	where $\mathbf{k}_i$ denotes the $i$-th element of the keys $\mathbf{K}$.
	At each iteration $t$, the users selected in previous iterations are masked to guard that each user's index appears precisely once in $\bm{\pi}$.
	
	
	Thirdly, the single-head attention layer is used to generate the probability of users being selected at time $t$. $\bm{\ell}_{t}$ can also be considered the similarity between the single query and the keys of the single-head attention layer. The keys $\mathbf{K}'$ is obtained in (\ref{equation_fixedinput}), and the single query $\mathbf{q}'_t$ is computed as follows:
	\begin{equation}
		\mathbf{q}'_t=\mathbf{W}^{Q'} \mathbf{u}_t,
	\end{equation}
	where $\mathbf{W}^{Q'}$ are learnable parameters matrices. Then, $\bm{\ell}_{t}$ can be derived as
	\begin{equation}
		\label{Eqn:probability}
		\bm{\ell}_{t}= \text{softmax}\left(\text{mask}\left(\text{clip}\left(\tfrac{{\mathbf{q}'_t}^{T}\mathbf{K}'}{\sqrt{d_k}}\right)\right)\right),
	\end{equation}
	where the clip function can clip the result within $\left[-10,10\right]$ by the tanh function to avoid the probability of each selected user being too large or too small \cite{kool2018attention}. 
	
	Finally, the selection module selects the user to be decoded based on $\bm{\ell}_{t}$. In the inference phase, the selection module works in the greedy mode. Specifically, the selection module greedily selects the one with the greatest probability to be the $t$-th decoded user, as
	\begin{equation}\label{equation_find_max_pro_user}
		\pi_t=\mathop{\arg\max}_{n}\ell_{t,n}.
	\end{equation}
	Up to now, the decoder operation at iteration $t$ is completed. 
	
	Upon repeating the aforementioned steps \( N \) times in the decoder, we can compile all \( \pi_t \) to form the SIC decoding order \( \bm{\pi} \):
	\begin{equation}
		\bm{\pi}=[\pi_1,\dots,\pi_N].
	\end{equation}

	\subsubsection{Step-by-Step SIC Ordering Example}
	\begin{figure}
		
		\centering
		\includegraphics[width=1 \linewidth]{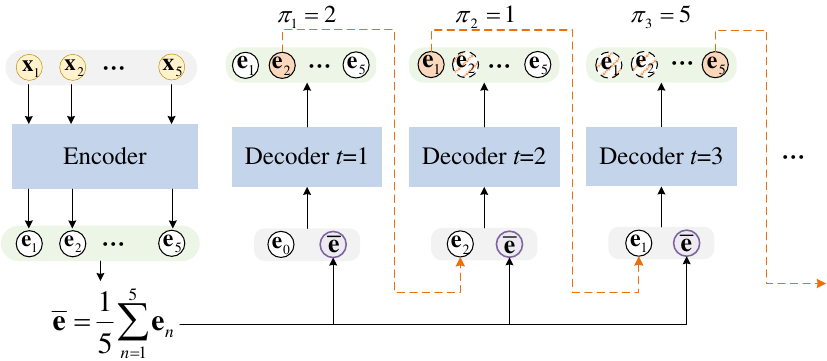}
		\caption{The process of iteratively determining the SIC ordering in ASOPA.}
		\label{fig:Aglorithm_Process}
	\end{figure}
	{
	To illustrate the iterative generation of the SIC ordering, we provide an example involving 5 users as shown in Fig.~\ref{fig:Aglorithm_Process}. 
	
	The encoder takes the five users' information $\mathbf{X}=[\mathbf{x}_1,\mathbf{x}_2,\dots,\mathbf{x}_5]$ as input, and correspondingly calculate the users' embedding $\mathbf{E}=[\mathbf{e}_1,\mathbf{e}_2,\dots,\mathbf{e}_5]$ and the global embedding ${\bf{\bar e}} = \frac{1}{5}\sum_{n = 1}^5 {{{\bf{e}}_n}} $.
	
	The decoder then iteratively generates the first three SIC orderings of five users as follows:
	\begin{enumerate}
	\item 
	In the first iteration (\(t=1\)), the decoder inputs \([\overline{\mathbf{e}}, \mathbf{e}_{0}]\), computes the probabilities \(\bm{\ell}_{t}\) using Equation~\eqref{Eqn:probability}, and selects the user with the highest probability for the first decoding. In this example, user~2 has the highest probability (\(\ell_{1,2}=0.4187\)) and is selected as \(\pi_1 = 2\). 
	\item
	In the second iteration (\(t=2\)), the decoder takes the global embedding \(\bar{\mathbf{e}}\) and the embedding of the first decoded user (\(\mathbf{e}_2\)) as inputs \([\overline{\mathbf{e}}, \mathbf{e}_{2}]\), and recalculates the probabilities \(\bm{\ell}_{t}\). As user~2 has already been selected, its probability is masked and set to zero (\(\ell_{2,2}=0\)). The highest probability in this iteration is \(\ell_{2,1}=0.4728\), leading to the selection of user~1 as \(\pi_2 =1\). 
	\item
	In the third iteration (\(t=3\)), the decoder inputs \([\overline{\mathbf{e}}, \mathbf{e}_{1}]\), and recalculates the probabilities \(\bm{\ell}_{t}\).
	With the probabilities of the previous selected users masked ($\ell_{3,2}, \ell_{3,1} =0 $), user~5 has the highest probability (\(\ell_{3,5}=0.7953\)) in this iteration and thus user~5 is selected as $\pi_3=5$.
	\end{enumerate}
	This procedure is repeated for two more iterations until the SIC decoding order for all five users is determined.
	The specific probabilities \({\ell}_{t,n}\) of all users over iterations are shown in Table~\ref{tab:case_decoder}.

	 }
	{
	\begin{table*}
		\centering
		
		\caption{Case study - Inference of the decoder}
		\label{tab:case_decoder}
		\begin{tabular}{c|c|lllll|c}
			\hline
			&{\multirow{3}{*}{Input}} & \multicolumn{6}{c}{Output}\\
			\cline{3-8}
			&&\multicolumn{5}{c|}{$\bm{\ell}_{t}$}& \multirow{2}{*}{Decoded user $\pi_t$}\\
			&&user 1& user 2& user 3&user 4&user 5&\\
			\hline
			$t=1$& $\left[\overline{\mathbf{e}}, \mathbf{e}_{0}\right]$&0.2727& \bfseries{0.4187}& 0.0223& 0.0295& 0.2568& $\pi_1 = 2$ \\
			
			$t=2$&  $\left[\overline{\mathbf{e}}, \mathbf{e}_{2}\right]$&  \bfseries{0.4728}& 0  (masked)& 0.0399& 0.0528& 0.4345& $\pi_2 = 1$ \\ 
			$t=3$& $\left[\overline{\mathbf{e}}, \mathbf{e}_{1}\right]$ & 0  (masked)&0  (masked)& 0.0893&0.1154&\bfseries{0.7953}& $\pi_3 = 5$\\
			
			$t=4$& $\left[\overline{\mathbf{e}}, \mathbf{e}_{5}\right]$ & 0  (masked)& 0  (masked)&0.4520& \bfseries{0.5480}&0  (masked)& $\pi_4 = 4$\\
			
			$t=5$& $\left[\overline{\mathbf{e}}, \mathbf{e}_{4}\right]$& 0  (masked) & 0  (masked) & \bfseries{1} & 0  (masked)& 0  (masked)& $\pi_5 = 3$\\
			\hline
		\end{tabular}
	\end{table*}
	}

	\subsection{Power Allocation Under Given SIC Ordering}\label{sec_convex_opt}
	
	{
	In this subsection, we design a convex transformation algorithm for the power allocation problem under given SIC ordering. Referring to Fig.~\ref{fig:algorithm_overview}, the actor network of ASOPA only generates SIC ordering without considering transmit power allocation and the corresponding constraints. To obtain the corresponding power allocation and evaluate the SIC ordering, we solve the power allocation subproblem and obtain the achieved network utility as follows.
	}

	{
		Upon determining the SIC ordering \( \bm{\pi} \), the sub-problem \textbf{P1} is  addressed to identify the optimal power allocation satisfying the constraints and subsequently calculate the network utility \( R(\bm{\pi}|\mathbf{X}) \) for the specified SIC ordering. 
		Given that the data rate \( R_n \) for each user is non-convex, \textbf{P1} is inherently a non-convex problem. To tackle this, we employ variable substitution to transform \textbf{P1} into a convex problem. The details of this transformation and the proof of its convexity are provided in Appendix B. We solve the transformed version of \textbf{P1} using the interior-point method \cite{boyd2004convex} of the CVX solver. The solution of \textbf{P1} yields the optimal power allocation $\mathbf{p}$ under the given SIC ordering \( \bm{\pi} \) satisfying the constraints. Following this, ASOPA outputs the network decision \( (\bm{\pi}, \mathbf{p}) \) based on the instantaneous user information \( \mathbf{X} \). 	}
	
		Consequently, the network utility \( R(\bm{\pi}|\mathbf{X}) \) can be calculated. 
		{	This calculated utility provides essential feedback on the effectiveness of the SIC ordering under the current policy. As such, the resource allocation module acts as a critic in training the actor network, playing a crucial role in evaluating these generated SIC orderings.
	
		Notice that ASOPA can be migrated to any other NOMA optimization problem with the required resource allocation problem. The extensibility of ASOPA is discussed in following Sec.~\ref{sec:Extension}.
	}
	
	{
		\subsection{Policy Update}	
		The red dotted lines in Fig.~\ref{fig:algorithm_overview} represent the policy update process.}	For an input instance $\mathbf{X}$, our goal is to maximize the expected sum of weighted users' logarithmic throughput, as
	\begin{equation}\label{equation_expecatation_R}
		\mathbb{E}_{\bm{\pi}\sim z_{{\theta}}\left(\cdot|\mathbf{X}\right)}R\left(\bm{\pi}|\mathbf{X}\right).
	\end{equation}
{	
	where $z_\theta\left(\bm{\pi}|\mathbf{X}\right)=\prod_{t=1}^{N}\ell_{t,\pi_t}$ is a measure of the likelihood that the generated SIC decoding order \( \bm{\pi} \) is the optimal sequence given the current network state \( \mathbf{X} \).}
	
	To explore more SIC orderings in the training phase, at the selection module of the decoder network, the user based on the probability $\bm{\ell}_t$ at the iteration $t$ is sampled \cite{kool2018attention} as
	\begin{equation}\label{equation_sample_user}
		\pi_t \sim \bm{\ell}_t.
	\end{equation}
	After iterating $t \in \left[1,2,...,N\right]$, we obtain the SIC ordering $\bm{\pi}$. {Then, the gradient of the actor network $\theta$ can be formulated by the REINFORCE algorithm \cite{williams1992simple} as
	\begin{equation} \label{Eqn:Rgradient}
		\begin{split}
			\mathbb{E}\Big(\!R\left(\bm{\pi}|\mathbf{X}\right)\!\nabla_{{\theta}}\log z_{{\theta}}\left(\bm{\pi}|\mathbf{X}\right) \Big).
		\end{split}
	\end{equation}
	However, the REINFORCE algorithm may be of high variance and thus produce slow learning \cite{SuttonRS2018}. To improve the performance of DRL, the REINFORCE algorithm with baseline \cite{SuttonRS2018} is adopted in this paper.
	As illustrated in Fig.~\ref{fig:algorithm_overview}, the baseline network $\theta'$, which uses the actor network parameters from the previous epoch, serves as a baseline to generate SIC orderings $\bm{\pi}^{BL}$ and subsequently calculates the network utility $R(\bm{\pi}^{BL}|\mathbf{X})$. 
	The difference between $R(\bm{\pi}^{BL}|\mathbf{X})$ and  $R(\bm{\pi}^{BL}|\mathbf{X})$ is utilized to train the current actor network $\theta$.
	
	Consequently, the gradient of the REINFORCE algorithm with baseline can be expressed and then approximated by Monte Carlo sampling as
		\begin{equation} \label{Eqn:gradient}
			\begin{split}
			&\mathbb{E}\left(\left(\!R\left(\bm{\pi}|\mathbf{X}\right)\!\!-\!\! R(\bm{\pi}^{BL}|\mathbf{X})\!\right)\!\nabla_{{\theta}}\log z_{{\theta}}\left(\bm{\pi}|\mathbf{X}\right)\right)\\
			&\approx \frac{1}{|\bm{\tau}|}\sum_{i=1}^{|\bm{\tau}|}\bigg(\!\!\!\left(R\left(\bm{\pi}_i|\mathbf{X}_i\right)- R\left(\bm{\pi}_i^{BL}|\mathbf{X}_i\right)\right)\nabla_{{\theta}}\log z_{{\theta}}\left(\bm{\pi}_i|\mathbf{X}_i\right)\!\!\bigg).
			\end{split}
		\end{equation}
	where $|\bm{\tau}|$ is the batch size, $\mathbf{X}_i$ is the $i$-th input, $\bm{\pi}_i$ is the SIC ordering  produced by the actor network based on \eqref{equation_sample_user}, and $\bm{\pi}_i^{BL}$ is the $i$-th SIC ordering produced by the baseline network.}
	In practice, the expectation in Equation~\eqref{Eqn:gradient} is approximated by averaging over a batch of uniformly sampled input instances \( {\left\{ {{{\bf{X}}_\tau }} \right\}_{\tau  \in \bm{\tau} }} \), where \( \bm{\tau} \) represents the index set of these sampled instances.
	After obtaining the gradients (\ref{Eqn:gradient}), Adam \cite{kingma2014adam} is applied as the optimizer to update the actor network's parameters ${\theta}$.

	{Our reinforcement learning algorithm for training the actor network is outlined in Algorithm~\ref{algorithm_train}. To facilitate this process, an empty memory with limited capacity is established to store past samples. As new samples are received in each time slot, policy updates are executed infrequently. For every policy update, a random batch of samples is selected from this memory to train the actor network. The baseline network, on the other hand, undergoes updates every epoch which consists of $M$ times policy update. This update process involves copying the parameters from the actor network to the baseline network, as denoted by \( \theta'=\theta \). This systematic approach ensures continuous adaptation and optimization of the actor network's performance based on the latest data.}

				%

	\begin{algorithm}[h]\label{algorithm_train}
		\SetAlgoLined 
		
		\SetKwData{Left}{left}\SetKwData{This}{this}\SetKwData{Up}{up}
		\SetKwRepeat{doWhile}{do}{while}
		\SetKwFunction{Union}{Union}\SetKwFunction{FindCompress}{FindCompress}
		\SetKwInOut{Input}{input}\SetKwInOut{Output}{output}
		\caption{Training ASOPA}
		\Input{Users' weights, maximum transmit power, and channel gains at each time slot $s$ $\mathbf{X}_{s}=\{ (w_i,P_i^{max}, g_i)\}_{i\in \mathcal{N}}$ , the training interval $\delta_T$ of the actor network, the update epoch $\delta_E$ of the baseline network;}
		\Output{SIC order $\bm{\pi}$ and power allocation $\mathbf{p}$;}
		Initialize the actor network's parameters ${\theta}$; \\
		Initialize the baseline network's parameters ${\theta}'\leftarrow {\theta}$; \\
		\For{$epoch=1,\cdots,E$}{
			Generate the SIC ordering $\bm{\pi}$ for $\mathbf{X}$ of the epoch from the actor network based on \eqref{equation_sample_user}; {//Inference of ASOPA for each sample}   \\
			Obtain $\mathbf{p}$ and $R\left(\bm{\pi}|\mathbf{X}\right)$ for $\mathbf{X}$ of the epoch by solving problem  \textbf{P1}; \\
			\For{$batch=1,\cdots,M$}{
				Uniformly sample a batch of samples $\{\bm{X}_{\tau}\}_{\tau\in\bm{\tau}}$ from prvious samples of the epoch; {//Infrequently policy update of ASOPA can be executed in parallel on different servers in practical applications}\\
				Generate the SIC ordering $\{\bm{\pi}_{\tau}\}_{\tau\in\bm{\tau}}$ from the actor network based on \eqref{equation_sample_user};\\
				Generate the baseline SIC ordering $\{\bm{\pi}^{BL}_{\tau}\}_{\tau\in\bm{\tau}}$ from the baseline network based on \eqref{equation_find_max_pro_user};\\
				Obtain $\{\left(R\left(\bm{\pi}_{\tau}|\mathbf{X}_{\tau}\right), R\left(\bm{\pi}_{\tau}^{BL}|\mathbf{X}_{\tau}\right)\right)\}_{\tau\in\bm{\tau}}$ by solving problem \textbf{P1};\\
				Calculate the gradient $\nabla_{{\theta}} S(\theta)$ from (\ref{Eqn:gradient}) based on $\{\left(R\left(\bm{\pi}_{\tau}|\mathbf{X}_{\tau}\right), R\left(\bm{\pi}_{\tau}^{BL}|\mathbf{X}_{\tau}\right)\right)\}_{\tau\in\bm{\tau}}$;\\
				Update the actor network's parameters $\theta$ using the Adam optimization algorithm based on the calculated gradients $\nabla_{{\theta}}$;
			}
			Update the baseline network's parameters ${\theta}'\leftarrow{\theta}$;
		}
	\end{algorithm}
	
	{	
		\subsection{Computation Complexity}
		ASOPA operates through two distinct processes: the inference process and the policy update process. During each time slot, ASOPA's inference process is activated to generate Successive Interference Cancellation (SIC) orderings and power allocations. Contrarily, the policy update process can be carried out less frequently, and can also be executed in parallel on different servers in practical applications. Given the crucial role of inference delay in determining the feasibility of field deployment, the inference complexity of ASOPA is a key area of interest.
		
		The inference process is detailed in lines 4-5 of Algorithm~\ref{algorithm_train}. Line 4 involves generating the SIC ordering from the actor network, where the computational complexity is primarily driven by matrix multiplications in the attention mechanism. The complexity of interactions between the query, key, and value is \( O(HDN^2) \), with \( H \) representing the number of heads in multiple attention mechanisms, \( D \) the dimension of these components, and \( N \) the number of users. Line 5 addresses the generation of power allocation by solving subproblem \textbf{P1}, which is reformulated into a convex problem \textbf{P2} and solved using the cvxopt solver with the Interior Point Method. The computational complexity of this method is \( O(N^{3.5}) \) \cite{MeS1992convex}. Therefore, the overall inference complexity of ASOPA is \( O(N^{3.5}) \). In Section~\ref{sec:latency}, we will numerically demonstrate that ASOPA meets the real-time requirements of NOMA networks.
	}

	{
	\section{Extension Scenarios} \label{sec:Extension}
	{
	ASOPA can be easily extended to various NOMA scenarios. To migrate to a new scenario, only the inputs of the actor network and baseline network need to be modified, while the structure of the actor network used to determine the SIC ordering remains unchanged. Correspondingly, the resource allocation subproblems are adjusted and re-solved for specific NOMA scenarios, ensuring optimal performance and efficiency under different network conditions.
	}

	In this section, we evaluate the extension scenarios of ASOPA, including NOMA networks with imperfect CSI, NOMA networks with multiple-antenna setups, and NOMA networks with multiple-BS setups.
	
	\subsection{Multiple Antenna}
	\begin{figure}
		\centering
		\includegraphics[width=0.7 \linewidth]{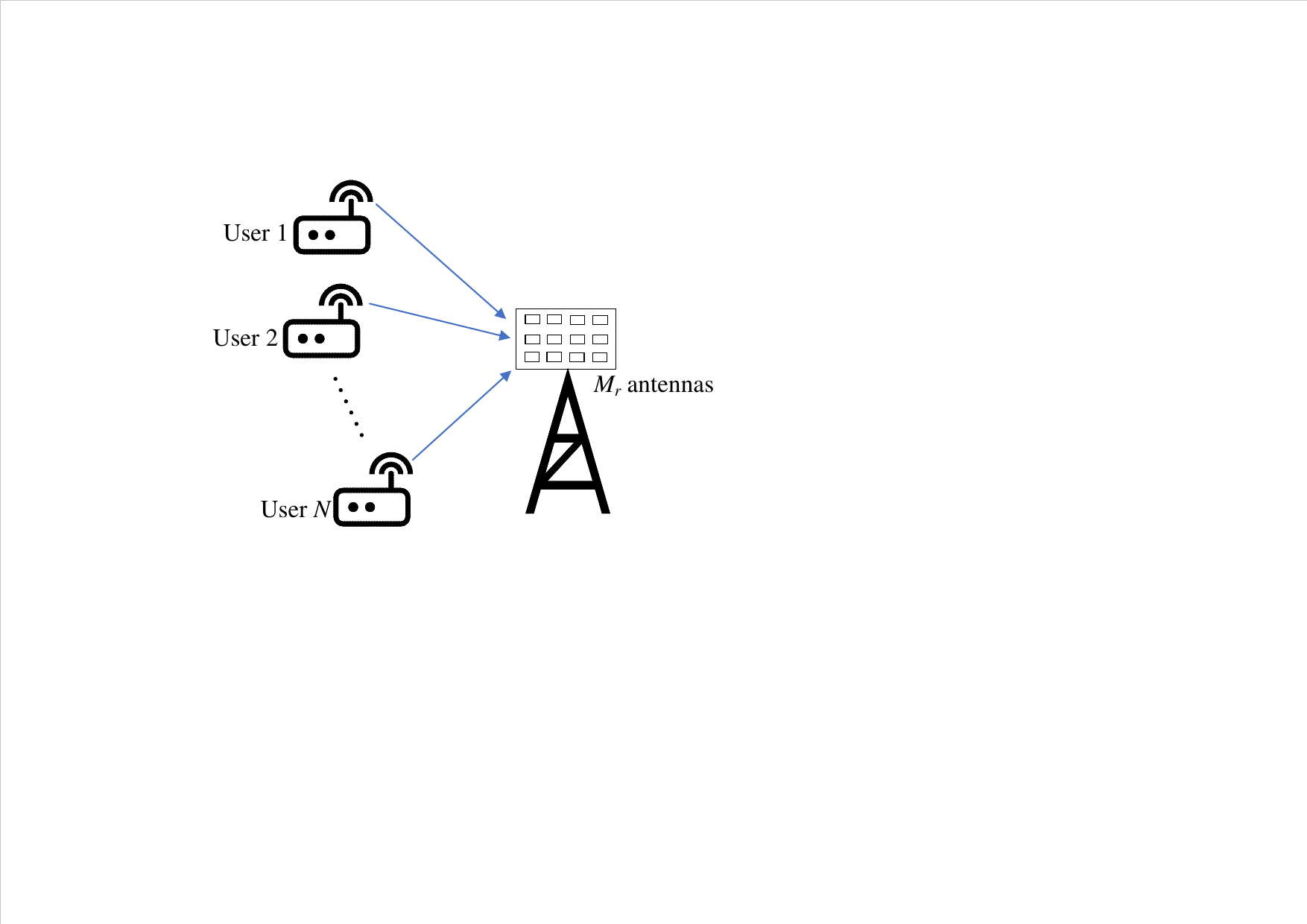}
		\caption{A NOMA network with multiple-antenna setup at the BS.}
		\label{fig:system_model_multipleantennas}
	\end{figure}
	In this subsection, we evaluate ASOPA in NOMA networks with multiple-antenna setups, since multiple-antenna technology has advantages in improving spectrum and energy efficiency \cite{QiX2023,ZhangJ2022}. 
	The uplink multiple-antenna system consists of a ${M_r}$ antennas BS and $N$ single-antenna users as shown in Fig.~\ref{fig:system_model_multipleantennas}. The received signal  ${\bf{Y}} \in {\mathbb{C}^{{M_r} \times 1}}$ of BS is
	\begin{equation}
		{\bf{Y}} = {\bf{HS}} + {\bf{n}},
	\end{equation} 
	where ${\bf{H}} \in {\mathbb{C}^{{M_r} \times N}}$  denotes the channel state from users to the receive antennas of BS, ${\bf{S}} = {\left[ {{s_1},{s_2},...,{s_N}} \right]^T}$  denotes the transmit signal matrix, and ${\bf{n}}\sim{\cal{CN}}\left( {{\bf{0}},{\sigma ^2}{\bf{I}}} \right)$ represents the additive white Gaussian noise at the BS side. ${\cal{CN}}(0,\sigma^2)$ denotes the complex Gaussian distribution with mean zero and the variance $\sigma^2$. The linear equalization, such as zero-forcing (ZF) or minimum-mean-square-error (MMSE), is used in multiple-antenna scenarios to symbol-by-symbol detection. According to the states of multiple-antenna scenario, ASOPA correspondingly generates the SIC ordering and power allocation. The specific steps are as follows.
	
	The equalization matrix ${\bf{V}}\in {\mathbb{C}}^{{N} \times {M_r}}$ for ZF \cite{QiX2023} or MMSE \cite{ZhangJ2022} can be expressed as \cite{ChrW2004}
	\begin{equation}
		{\bf{V}} = \left\{ {\begin{array}{*{20}{l}}
				{\bf{P}}{{\bf{H}}^H}{{({{\bf{H}}}{\bf{P}}{\bf{H}}^H + \sigma {\bf{I}})}^{ - 1}} &{\text{when MMSE}},\\
				{{\bf{H}}^H}{{({{\bf{H}}}{\bf{H}}^H)}^{ - 1}} &{\text{when ZF}},
		\end{array}} \right.
	\end{equation}
	where ${\bf{P}}$ denotes the diagonal matrix ${diag}(p_1, p_2, …, p_N)$.
	
	Then the received estimated signal is
	\begin{equation}
		{\bf{\hat S}} = {\bf{VY}} = {\bf{VHS}} + {\bf{Vn}},
	\end{equation}
	and the estimated value of user $n$ is
	\begin{equation}
		{\hat s_n} = \sqrt {{p_n}} {{\bf{v}}_n}{{\mathbf{h}}_n}{s_n} + \sum\limits_{n' \ne n} {\sqrt {{p_{n'}}} } {{\bf{v}}_n}{{\bf{h}} _{n'}}{s_{n'}} + {{\bf{v}}_n}{\bf{n}},
	\end{equation}
	where ${{\bf{h}}_n} \in {\mathbb{C}^{{M_r} \times 1}}$ denotes the channel states from user $n$ to the receive antenna of the BS, and ${\bf{v}}_n\in {\mathbb{C}^{1 \times {{M_r}}}}$ denotes the $n$-th row of ${\bf{V}}$, which can be given by
	\begin{equation}
		\label{eqn:MMSEVn}
		{\bf{v}}_n\!\!= \!\!\left\{ \!{\begin{array}{*{20}{c}}\!\!\!
				{{p}}_n{{\bf{h}}_n^H}{{\left(\!\sum\limits_{\xi\left(n'\right)\ge\xi\left(n\right),\forall n' \in\mathcal{N}}\!\!\!\!\!\!\!\!\!\!\!\!{p_n}{{\bf{h}}_{n'}}{{\bf{h}}_{n'}^H} \!+ \!\sigma {\bf{I}}\!\right)}^{\!- 1}} \!\!\!\!\!\!\!\!&\!\!\!\!{\text{when MMSE}},\!\!\!\\
				{{\bf{h}}_n^H}{{({{\bf{h}}_n}{\bf{h}}_n^H)}^{ - 1}} \!\!\!&\!\!\!\!\!\!\!\!\!\!\!{\text{when ZF}},\!\!
		\end{array}} \right.\!\!\!
	\end{equation}
	
	According to the estimated value of user $n$, its achieved transmit rate can be calculated by
	\begin{equation}
		{R_n} = {\log _2}\left( {1 \!+ \!\frac{{{{\left| {{{\bf{v}}_n}{{\bf{h}}_n}} \right|}^2}{p_n}}}{{\sum\limits_{\xi\left(n'\right)>\xi\left(n\right),\forall n' \in\mathcal{N}}\!\!\!\!\!\!\!\!\!\! {{{\left| {{{\bf{v}}_n}{{\bf{h}}_{n'}}} \right|}^2}{p_{n'}} + } {{\left| {{{\bf{v}}_{n}}} \right|}^2}{\sigma ^2}}}} \right).
	\end{equation}

	When using ZF method, the equalization matrix $\bf{V}$ in (\ref{eqn:MMSEVn}) is independent of $\bf{p}$, so that the power allocation problem can be transformed into a convex as appendix~E and solved by the CVX solver.
	
	
	However, when using the MMSE method, the equalization matrix $\bf{V}$ depends on the variable $\bf{p}$, making the power allocation problem of MMSE intractable and non-convex. 
	To address this difficulty, alternative optimization is employed to decompose the non-convex problem into two subproblems: one for $\bf{V}$ and one for $\bf{p}$. When $\bf{V}$ is fixed, the power allocation problem can be transformed into a convex problem, as shown in Appendix E. Therefore, the alternative optimization starts with an initial power allocation to calculate the corresponding equalization matrix. Using this equalization matrix, it then applies the convex method to determine a new power allocation. With this updated power allocation, the equalization matrix is recalculated. This process repeats until the difference between successive power allocations is smaller than the set threshold.

	Different from the single antenna scenario, the channel gain between the BS and a user $n$ is a complex value, whose real and imaginal parts are denoted as $\bm{h}^r_{n}=\{h^r_{1,n},...,h^r_{M_r,n}\}$ and $\bm{h}^c_{n}=\{h^c_{1,n},...,h^c_{M_r,n}\}$, respectively. To tackle multiple-antenna scenarios, ASOPA modifies its input from ${\bf{X}} = {\{ ({w_n},P_i^{max},{g_n})\} _{n \in {\cal N}}}$ to ${{\bf{X}}} = {\{ ({w_n},P_n^{max},{\bm{h}^r_{n}},\bm{h}^c_{n})\} _{n \in {\mathcal N}}}$. The rest of the ASOPA structure remains the same as one illustrated in Fig.~\ref{fig:algorithm_overview}.
}
	{
	\subsection{Imperfect Channel}
	In this subsection, we assess the impact of estimation errors on the performance of ASOPA. 
	For the perfect CSI scenario, the channel gain is expressed as \(g_n=\overline{g}_n|\alpha_n|^2\), where \(\overline{g}_n=A_d\left(\frac{3\cdot10^8}{4\pi f_c b_n}\right)^{b_e}\) and $\alpha_n~\sim~{\cal{CN}}\left( {0,1} \right)$ account for path loss power gain and the Rayleigh fading channel coefficient between BS and $n$-th user, respectively.
	Since the path loss coefficient are large-scale fading factors and are slowly varying, we assume that the path loss coefficient $\overline{g}_n$ between BS and each user can be estimated perfectly.
	However, in dynamic and complex wireless environments, accurately acquiring time-varying Rayleigh fading channel gains is challenging. Following the approaches in \cite{YooT2006,HanS2009}, the Rayleigh fading channel gain is modeled as
	\begin{equation}
		{\alpha_n} = {\hat \alpha_n} + {\epsilon_n}
	\end{equation}
	where ${\alpha_n}$ is the realistic Rayleigh fading channel coefficient between BS and $n$-th user, ${\hat h_n} \sim {\cal{CN}}\left( {0, 1-\sigma_{\epsilon}^2} \right)$ denotes the estimated channel coefficient, and ${\mathcal{\epsilon}_n}~\sim~{\cal{CN}}\left( {0,\sigma _{{{\mathcal{\epsilon}}}}^2} \right)$ is the estimated error. Note that the parameter  $\sigma _{{{\mathcal{\epsilon}}}}^2$ indicates the quality of channel estimation, and keeps constant as \cite{YangZ2016,GaoY2018}. We assume that ${\hat \alpha_n}$ and ${\epsilon_n}$ are uncorrelated.
	
	If the perfect CSI is known, the maximum achievable data rate between BS and $n$-th can be written as
	\begin{equation}
		c_n = W \log_2(1+\phi_n)
	\end{equation}
	where
	\begin{equation}
		{\phi _n} = \frac{{{p_n}|\alpha_n|^2{{\overline g}_n}}}{{\sum\limits_{\xi \left( {n'} \right) > \xi \left( n \right),\forall n' \in {\cal N}} {{p_{n'}}|\alpha_n|^2{{\overline g}_{n'}}}  + {N_0}}}. \label{eqn:PerfectSINR}
	\end{equation}
	In (\ref{eqn:PerfectSINR}), $\phi _n$ denotes the signal-to-interference-plus-noise ratio (SINR) of the user $n$. In practice, the BS can only obtain the estimated fading channel coefficient ${\hat \alpha}_n$. The scheduled data rate with imperfect CSI can be expressed as
	\begin{equation}
		r_n = W\log_2(1+{\hat \phi}_n)
	\end{equation}
	where
	\begin{equation}
		{{\hat \phi}_n} = \frac{{{p_n}|\hat \alpha_n|^2{{\overline g}_n}}}{{\sum\limits_{\xi \left( {n'} \right) > \xi \left( n \right),\forall n' \in {\cal N}} {{p_{n'}}|\hat \alpha_n|^2{{\overline g}_{n'}}}  + {N_0}}}. \label{eqn:ImperfectSINR}
	\end{equation}
	However, the scheduled data rate with imperfect CSI may easily exceed the maximum achievable data rate, i.e., $r_n>c_n$. To measure the performance of this case, we introduce outage probability as a metric \cite{FangF,ZhangH2020}. Therefore, the weighted proportional fairness function with outage probability can be expressed as $\sum\limits_{n=1}^{N}{w_n\ln r_n \Pr[r_n\le c_n|{\hat \alpha}_n]}$.
	$\Pr[r_n\le c_n|{\hat \alpha}_n]$ denotes the probability of a case when the scheduled data rate $r_n$ is less than or equal to the maximum data rate $c_n$ under the estimated channel coefficient ${\hat \alpha}_n$. The optimization problem can be reformulated as
	\begin{subequations}
		\label{Eqn:ImperfectProb}
		\begin{align}
			\max\limits_{\mathbf{\bm{\pi}},\textbf{p}}\ &\sum\limits_{n=1}^{N}{w_n\ln r_n \Pr[r_n\le c_n|{\hat \alpha}_n]} \label{ImperfectCSIobjective}\\
			s.t.\ & \Pr[c_n<r_n|{\hat \alpha}_n]\le \epsilon_{out}, \forall n \in\mathcal{N},\label{Outage_constraint} \\
			&0<p_n\le P_n^{max},\forall n \in\mathcal{N},\\
			&\bm{\pi}\in\Pi,
		\end{align}
	\end{subequations}
	where (\ref{Outage_constraint}) is introduced to satisfy the channel outage probability requirement $\epsilon_{out}$ for all users in the imperfect CSI scenario.
	Due to the probability constraints (\ref{Outage_constraint}), this problem (\ref{Eqn:ImperfectProb}) turns into a non-convex problem and cannot easily be optimally solved in polynomial time \cite{FangF}. To tackle this problem efficiently, we transform the probabilistic mixed problem into a non-probability problem as
	\begin{subequations}
		\label{TransformedProblem}
		\begin{align}
			\max\limits_{\mathbf{\bm{\pi}},\textbf{p}}\ &\sum\limits_{n=1}^{N}{w_n\ln(1-\epsilon_{out}) {\tilde r}_n}\\
			s.t.\ &0<p_n\le P_n^{max},\forall n \in\mathcal{N},\\
			&\bm{\pi}\in\Pi,
		\end{align}
	\end{subequations}
	where ${\tilde r}_n = W\log_2(1+{\tilde \phi}_n)$, and the transformed SINR ${\tilde \phi}_n$ can be expressed as
	\begin{equation}
			{\tilde \phi}_n = \frac{\epsilon_{out}{F_{|g_n|^2}^{-1}(\epsilon_{out}/2)p_n }}{\epsilon_{out}\sigma_{\epsilon}^2+\!\!\!\!\!\!\!\!\!\!\sum\limits_{\xi \left( {n'} \right) > \xi \left( n \right),\forall n' \in {\cal N}}\!\!\!\!\!\!\!\!\!\!2(|{\hat g}_{n'}|^2+\sigma^2_{\epsilon})p_{n'}},
	\end{equation}
	where $F_{|g_n|^2}^{-1}(\epsilon_{out}/2)$ denotes the inverse cumulative distribution function of a noncentral chi-square random variable with 2 degrees of freedom and non-centrality parameter $2|{\hat g}_{n}|^2/\sigma^2_{\epsilon}$. The details of the probabilistic mixed problem transformation are shown in Appendix~D.
	
	
	Notice that once the SIC ordering is determined, the power allocation problem of (\ref{TransformedProblem}) can be transformed into a convex problem as well as Appendix~B and solved by the CVX solver. Thus, ASOPA can be applied to solve it by simply modifying its input from ${\bf{X}} = {\{ ({w_n},P_i^{max},{g_n})\} _{n \in {\cal N}}}$ to ${{\bf{X}}} = {\{ ({w_n},P_n^{max},{|{{\hat \alpha}_n|}^2{\overline g}_n})\} _{n \in {\mathcal N}}}$ }
	
	{
	\subsection{Multiple BS}
	In this subsection, we assess the performance of ASOPA in NOMA networks with multiple BSs, represented by the set \( \mathbf{B} \), each containing \( N_b \) users where \( b \in \mathbf{B} \). In scenarios involving multiple BSs, each user experiences inter-cell interference from users linked to other BSs. To quantify this interference, the channel gain from a user \( n \) in BS \( b\) to another BS \( b' \) is defined as \( h^{({b})}_{n,b'} \). Specifically, the superscript \( (b) \) indicates that user \( n \) is associated with BS \( b \).
	
	To accommodate the multiple-BS scenario, it is straightforward to adjust the input of ASOPA to ${\bf{X}} = \{(w_n^{(b)},P_{n,max}^{(b)},g_n^{(b)},\bm{h}^{(b)}_n)\}_{n\in\mathcal{N}_b,b\in\mathbf{B}}$, where \( \bm{h}^{(b)}_n = \{h^{(b)}_{n,b'}\}_{b'\in\mathbf{B}\setminus \{b\}} \). The resource allocation problem is still convex and can be solved using the CVX solver. The detailed setup of the multiple-BS scenario is in Appendix~E.
	
	The addition of inter-cell interference, however, adds complexity to the SIC ordering problem. In ASOPA, the next decoding user is iteratively chosen based on the highest probability from Equation~\eqref{equation_find_max_pro_user}, under the assumption that all users are within the same BS. 
	However, in scenarios involving multiple BSs, the SIC decoding order is specific to each BS. Comparing the probabilities of users from different BSs lacks physical insight, as such comparisons are not meaningful in this context.
	To overcome this issue, we introduce enhance the decoding process by introducing an additional masking mechanism in the decoder. This mechanism allows for the generation of appropriate SIC orderings for users across all BSs. Specifically, the decoder generates the SIC ordering for users associated with the current BS while effectively masking the users of other BSs. This approach ensures efficient decoding order determination in multi-BS NOMA networks without needing to modify Equation (\ref{equation_find_max_pro_user}).

	\begin{table*}
		\centering
		
		\renewcommand{\arraystretch}{1}
		\caption{Case study - The mask mechanism in ASOPA in a dual-BS NOMA network}
		\label{tab:case_decoder_mutipleBS}
		{
			\begin{tabular}{c|c|lll|lll|c}
				\hline
				&{\multirow{3}{*}{Input}} & \multicolumn{7}{c}{Output}\\
				\cline{3-9}
				&&\multicolumn{3}{c}{$\bm{\ell}^{(1)}_{t}$}&\multicolumn{3}{c|}{$\bm{\ell}^{(2)}_{t}$}&\multirow{2}{*}{Decoded user $\pi^{(b)}_t$}\\
				&&user 1& user 2& user 3&user 1&user 2& user 3 & \\
				\hline \\[-1em]
				$t=1$& $[\overline{\mathbf{e}}, \mathbf{e}_{0}]$& \bfseries{0.3997}& 0.2890&0.3114& 0(masked)& 0 (masked)& 0 (masked)&$\pi^{(1)}_{1} = 1$ \\[0.2em]
				
				$t=2$&  $[\overline{\mathbf{e}}, \mathbf{e}^{(1)}_{1}]$&   0  (masked)& 0.4839& \bfseries{0.5161}& 0  (masked)& 0  (masked)& 0  (masked)&$\pi^{(1)}_{2} = 3$ \\[0.2em]
				
				$t=3$& $[\overline{\mathbf{e}}, \mathbf{e}^{(1)}_{3}]$ & 0  (masked)&\bfseries{1}& 0  (masked)&0  (masked)&0  (masked)& 0  (masked)&$\pi^{(1)}_{3} = 2$\\[0.2em]
				
				$t=4$& $[\overline{\mathbf{e}}, \mathbf{e}^{(1)}_{2}]$ & 0  (masked)& 0  (masked)&0  (masked)& 0.0731& \bfseries{0.4698}& 0.4571&$\pi^{(2)}_{1} = 2$\\[0.2em]
				
				$t=5$& $[\overline{\mathbf{e}}, \mathbf{e}^{(2)}_{2}]$& 0  (masked) & 0  (masked) & 0  (masked) &0.0407 &0  (masked) &\bfseries{0.9593}&$\pi^{(2)}_{2} = 3$\\[0.2em]
				$t=6$& $[\overline{\mathbf{e}}, \mathbf{e}^{(2)}_{3}]$& 0  (masked) & 0  (masked) & 0  (masked) & \bfseries{1}& 0  (masked)& 0  (masked)&$\pi^{(2)}_{3} = 1$\\[0.2em]
				\hline
		\end{tabular}}
	\end{table*}
	\begin{figure}
		\centering
		\includegraphics[width=1\linewidth]{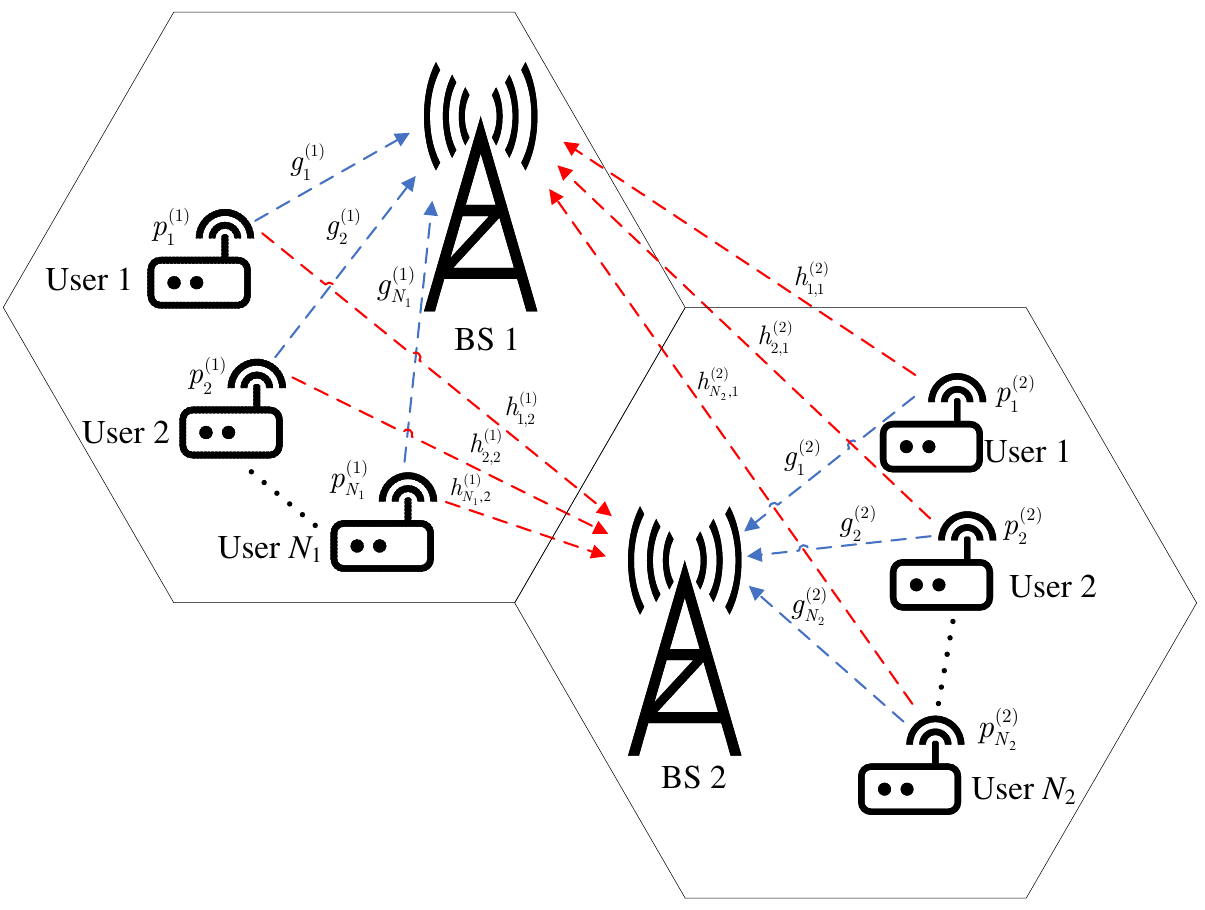}
		\caption{The system model of the dual-BS NOMA networks.}
		\label{fig:system_model_MultipleBSs}
	\end{figure}
	To demonstrate the effectiveness of the enhanced mask mechanism in ASOPA, let's consider a case study outlined in Table~\ref{tab:case_decoder_mutipleBS}. As depicted in Fig.~\ref{fig:system_model_MultipleBSs}, we consider a dual-BS NOMA network, where each BS contains three users, as \( N_b=3 \) for all \( b \in \mathbf{B} = \{1,2\} \). Consequently, it takes six iterations for ASOPA to establish the SIC ordering for all users. Utilizing the mask mechanism, ASOPA initially decodes the users in BS~1 during iterations \( t=1 \), 2, and 3, and then shifts to decoding users in BS~2 for iterations \( t=4 \), 5, and 6.	
	In the first iteration (\( t=1 \)), the algorithm calculates the probabilities \( \bm{\ell}^{(1)}_{1} \) for users in BS~1 from Equation~\eqref{equation_find_max_pro_user} and simultaneously masks the probabilities of users in BS~2 by setting \( \bm{\ell}^{(2)}_{1} = 0 \). Given that user 1 in BS~1 has the highest probability of 0.3997, it is selected as the first decoded user, \( \pi^{(1)}_{1} = 1 \). In the next two iterations, users in BS~2 remain masked, indicated by \( \bm{\ell}^{(2)}_{2} = 0 \) and \( \bm{\ell}^{(2)}_{3} = 0 \).	
	Conversely, when decoding the SIC ordering for users in BS~2 during iterations \( t=4 \), 5, and 6, the users in BS~1 are masked with \( \bm{\ell}^{(1)}_{t} = 0 \). This mechanism enhances the efficiency and accuracy of ASOPA in multi-BS NOMA networks by systematically focusing on one BS at a time, thereby streamlining the decoding process.}

	\section{Numerical Results} \label{sec:NumResults}
	In this section, we evaluate the proposed ASOPA algorithm through simulations in uplink NOMA networks. In these simulations, users are uniformly deployed within a 100-meter radius circle, with a BS at the center. The average channel gain, \(\overline{g}_n\), adheres to the free-space path loss model, following \(\overline{g}_n=A_d\left(\frac{3\cdot10^8}{4\pi f_c b_n}\right)^{b_e}\) \cite{huang2019deep}, where \(A_d=4.11\) represents the antenna gain, \(f_c=915\) MHz is the carrier frequency, \(b_n\) is the distance between each user and the BS, and \(b_e=2.8\) is the path loss exponent. Each user \(n\)’s wireless channel gain, \(g_n\), is modeled as a Rayleigh fading channel, expressed as \(g_n=\overline{g}_n|\alpha_n|^2\), with \(|\alpha_n|^2\) being an independent random channel fading factor following an exponential distribution with unit mean. The system parameters include a bandwidth \(B\) of 1 MHz and a noise power spectral density of \(-174\) dBm/Hz. Each user's maximum power is capped at \(P_n^{max}=1\) Watt, and the user weight \(w_n\) is chosen from the set \(\{1,2,4,8,16,32\}\).
	
	{
		For the neural network training, the samples \(\mathbf{X}\) arrive in each time slots and are stored in a replay memory of size 1280. The number of users \(N\) in each sample varies uniformly between 5 and 10. The batch size for once policy update is set to \(\left | \bm{\tau} \right | = 64\), and each training epoch consists of $M=20$ times policy updates. After each training epoch, the baseline network updates its parameters \(\theta'\). The learning rate for the Adam optimizer is set at 1e-4, and the embedding dimension for users in the actor network is \(d_e=128\). 
		The simulations are carried out on a desktop with an Intel Core i7-10700 2.9 GHz CPU, 32 GB memory, and an NVIDIA GeForce RTX 3060 Ti GPU, ensuring robust computational performance.}	
	{ The source code for ASOPA is accessible at https://github.com/Jil-Menzerna/ASOPA.}
	
	\subsection{Convergence Performance}
	\begin{figure*}
		\centering
		\includegraphics[width=1.03\linewidth]{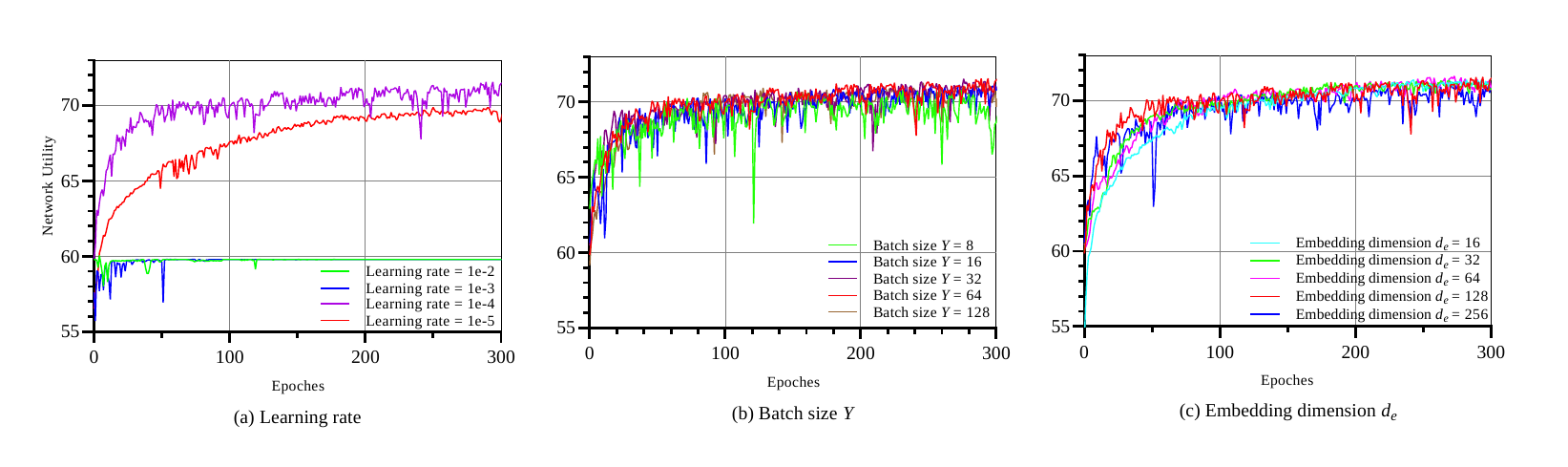}
		\caption{Convergence performance of ASOPA under different algorithm parameters when $N=10$.}
		\label{fig:convergence_performance}
	\end{figure*}
	
	In Fig.~\ref{fig:convergence_performance}, we evaluate the effect of different parameters on the convergence performance of ASOPA, including different learning rates, batch sizes, and embedding dimensions.
	
	Fig.~\ref{fig:convergence_performance}(a) shows the effect of different learning rates. We can see that a significant learning rate (1e-2 or 1e-3) causes the algorithm to converge to a local optimum, but a small learning rate (1e-5) results in slow convergence. Hence, the learning rate is set as 1e-4.
	
	Fig.~\ref{fig:convergence_performance}(b) shows the effect of different batch sizes. A small batch size (8 or 16) leads to high variance in the network utility. 
	The larger the batch size, the more memory space the algorithm consumes. Also, a large batch size may reduce the randomness of gradient descent and lead to the local minimum value. Hence, the batch size is set to 64.
	
	Fig.~\ref{fig:convergence_performance}(c) shows the effect of different embedding dimensions $d_e$. A small embedding dimension (16) cannot adequately characterize features and thus degrades the performance and convergence speed. A large embedding dimension (256) may overfit the training set, resulting in unstable performance. Hence, the embedding dimension is set as $d_e=128$.
	
	Overall, the simulation results in Fig.~\ref{fig:convergence_performance} show that the proposed ASOPA can converge under the set parameters.
	
	\subsection{Network Utility Performance}
	To evaluate the SIC ordering generated by ASOPA, we compare it with five baseline algorithms:
	\begin{enumerate}
		\item Exhaustive search \cite{hu2019joint}: This scheme calculates the network utility for all $N!$ SIC orderings and obtains the optimal network utility.
		{
		\item Tabu search \cite{QianL2024}: This scheme initiates a SIC ordering and swaps any two users' ordering to search. For each search iteration, it tries all possible swapping of two users for a SIC ordering and selects the best one for the next search iteration.
		To the best of our knowledge, the Tabu search algorithm presented in \cite{QianL2024} is the state-of-the-art algorithm for dynamic SIC ordering, albeit at the cost of high computational complexity.}
		\item Meta-scheduling \cite{qian2019optimal}:  This scheme sequentially adds and inserts each user into an order. It tries every possible insertion position for each insertion and greedily chooses the one with the greatest utility gain. 
		\item Weight descending \cite{ding2020unveiling2}: The static SIC ordering follows the descending order of users' weights.
		\item Channel descending \cite{duan2019resource}: The static SIC ordering follows the descending order of users' channel gains.
	\end{enumerate}

	After those baseline algorithms determine the SIC ordering, the optimal transmit power is determined by the power allocation method proposed in Section~\ref{sec_convex_opt}.
	
	\begin{figure}
		
		\centering
		\includegraphics[width=1\linewidth]{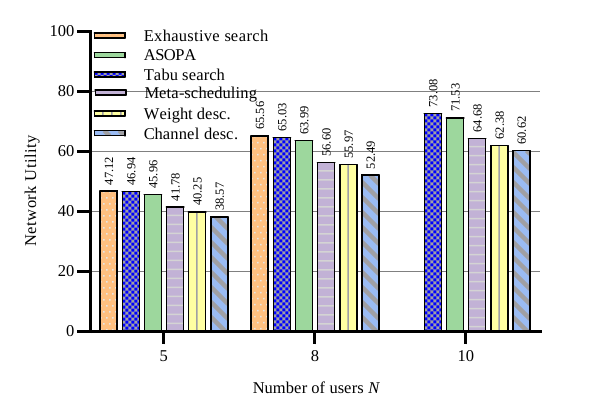}
		\caption{The network utility under different numbers of users.}
		\label{fig:performance_duibi_users_num}
	\end{figure}
	
	Fig.~\ref{fig:performance_duibi_users_num} presents the network utility achieved by different algorithms for varying numbers of users $N$. Exhaustive search achieves optimal performance with $N=5$ and $N=8$ but not $N=10$ due to the unacceptable running time for enumerating $10!$ possible SIC ordering.
	{
	When $N=5$ and $N=8$, through sufficient search iteration, Tabu search achieves 99.61\% and 99.19\% of the optimal performance obtained by exhaustive search. ASOPA achieves 97.54\% and 97.60\% of the optimal performance, which is close to the performance of Tabu search. 
	When $N=5$, $N=8$, and $N=10$, ASOPA is over 10\% higher in network utility than the other three baseline algorithms besides Tabu search, respectively.}
	
	\begin{figure}
		
		\subfigure[Convergence performance of ASOPA.]
		{
			\label{fig:convergence_large_scale}
			\includegraphics[width=0.95\linewidth]{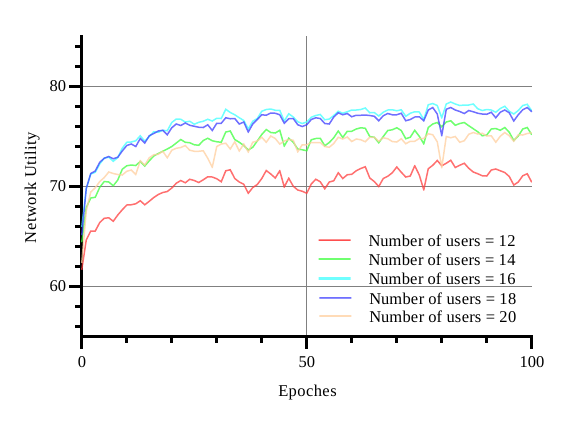}
		}
		\subfigure[The network utility of different algorithms.]
		{
			\label{fig:performance_large_scale}
			\includegraphics[width=0.93\linewidth]{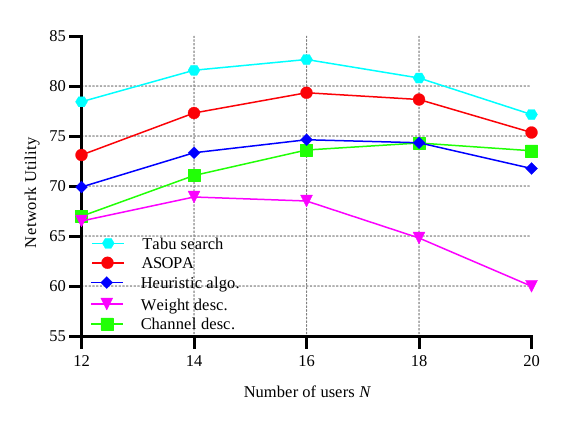}
		}
		\caption{The performance of ASOPA in large-scale scenarios when $N$ is between 10 and 20.}
		\label{fig:convergence_performance_large_scale}
	\end{figure}
	
	{
		Fig.~\ref{fig:convergence_performance_large_scale} provides further evaluation of ASOPA in large-scale scenarios, specifically where the number of users \(N\) varies from 10 to 20. In Fig.~\ref{fig:convergence_large_scale}, ASOPA demonstrates a consistent convergence rate of around 50 epochs, regardless of the specific values of \(N\). 
		{ 
		Meanwhile, Fig.~\ref{fig:performance_large_scale} illustrates that ASOPA consistently achieves average 95\% performance of Tabu search, and outperforms the other three baseline algorithms in these large-scale scenarios, aligning with the observations from Fig.~\ref{fig:performance_duibi_users_num}. }
		An interesting observation is that for \(N > 16\), the channel descending algorithm surpasses Meta-scheduling in terms of network utility. 
		This comparison further underscores that network utility tends to decline when the number of users in a NOMA system exceeds a certain threshold, particularly when \(N > 16\).
		{
		The decline of network utility can be attributed to the concave logarithm throughput function $\ln{R_n}$ in network utility in (3). Intuitively, as the number of users increases, the sum rate $\sum_{n=1}^N R_n$ tends to saturate, leading to a decrease in the sum of logarithms $\sum_{n=1}^N \ln R_n$ due to Jensen's inequality.}
		Overall, the results depicted in Fig.~\ref{fig:convergence_performance_large_scale} confirm ASOPA's effectiveness in handling large-scale scenarios and its superiority over all baseline algorithms.
	}

	\begin{figure}
		
		\subfigure[Boxplot of the normalized network utility for ASOPA and other baseline algorithms when $N=5$. The central line indicates the median, while the bottom and top edges of the box indicate the $25th$ and $75th$ percentiles, respectively.]
		{
			\label{fig:performance_boxline}
			\includegraphics[width=0.93\linewidth]{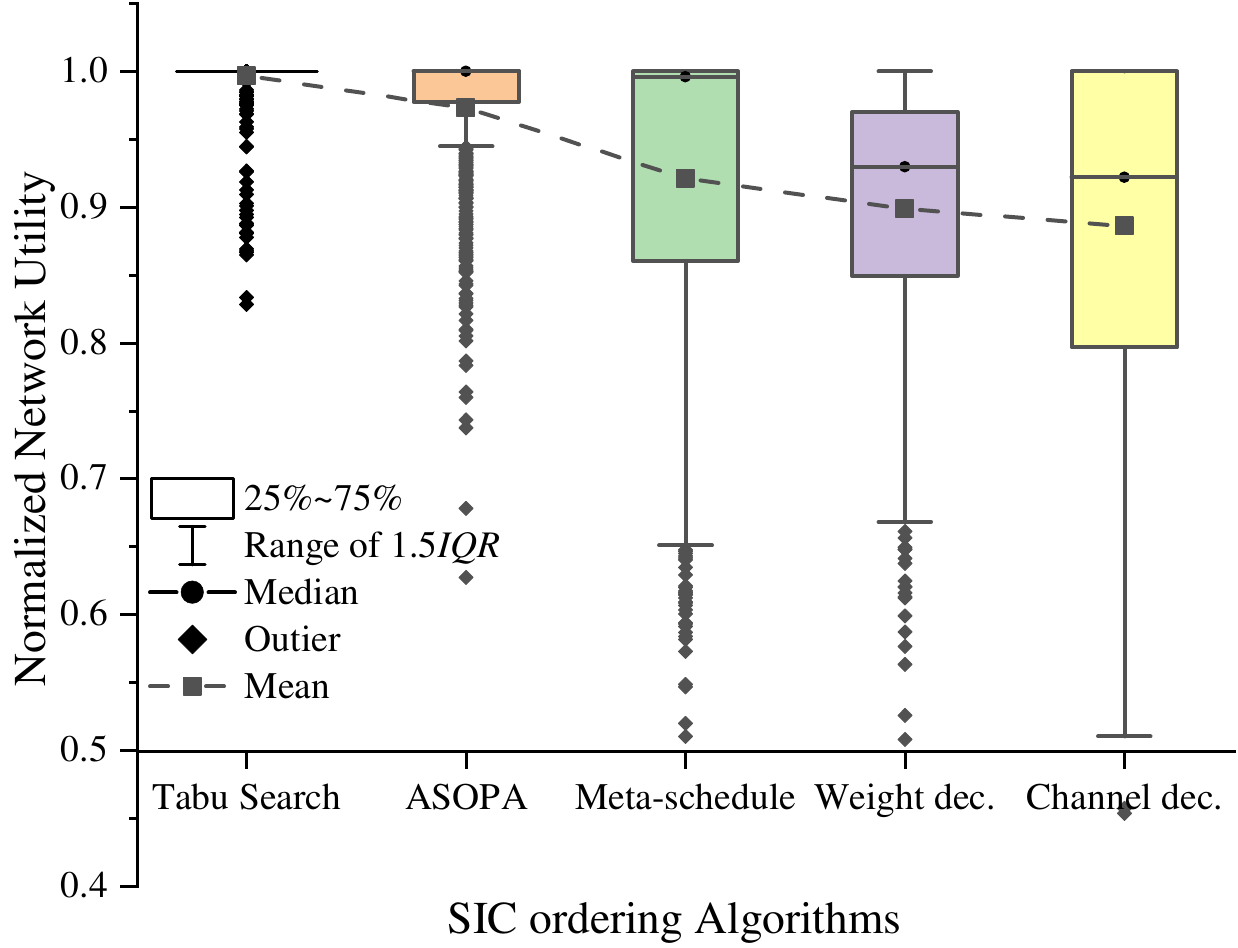}
		}
		\subfigure[Stacked plot of the top ten best SIC ordering hits when $N=5$.]
		{
			\label{fig:performance_topten}
			\includegraphics[width=1\linewidth]{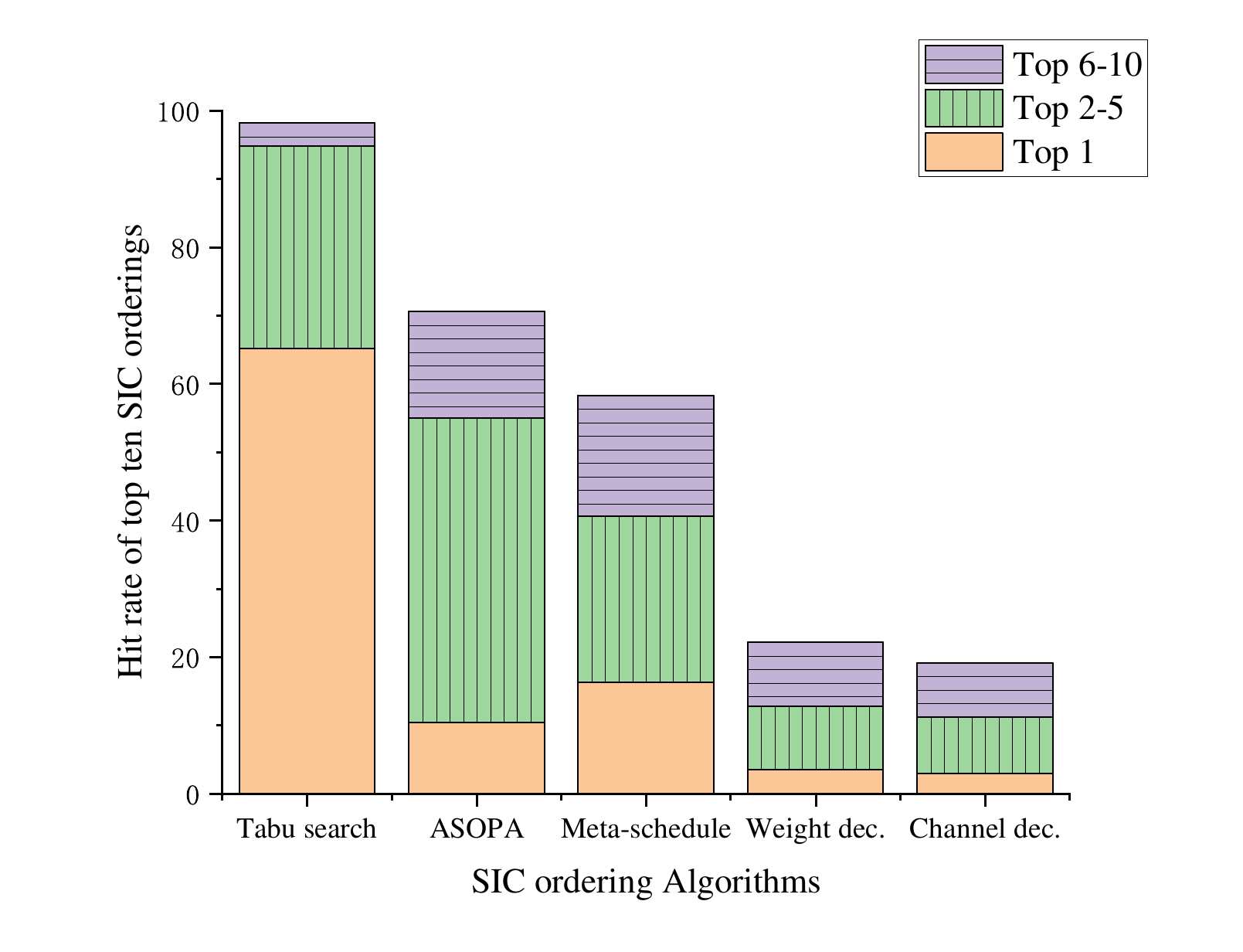}
		}
		\caption{The distribution of the network utility achieved by different algorithms.}
		\label{fig:performance_duibi_ES_norm}
	\end{figure}
	
	In Fig.~\ref{fig:performance_duibi_ES_norm}, we further compare the performance of ASOPA and baseline algorithms over 1000 independent samples when $N=5$. 
	Fig. \ref{fig:performance_boxline} displays the mean, median, confidence interval, and outliers of the normalized network utility for different algorithms. The normalized network utility is the ratio of the network utility achieved by an algorithm to the optimal network utility obtained by exhaustive search.
	{
	We observe that the medians of Tabu search and ASOPA are close to 1, and the confidence intervals of Tabu search and ASOPA are over 99\% and 97\%, respectively. Although some outliers affect the mean of ASOPA, it still outperforms the baseline algorithms besides Tabu search. 
	In Fig. \ref{fig:performance_topten}, we present the hit rate of the top 10 maximum network utilities for ASOPA and baseline algorithms. The hit rate is defined as the percentage of times that an algorithm generates an SIC ordering that appears in the top 10 maximum network utilities obtained by exhaustive search. We observe that ASOPA achieves hit rates of over 55\% and 70\% for the top 5 and top 10 maximum network utilities, respectively. }
	The results in Fig. \ref{fig:performance_duibi_ES_norm} further confirm that ASOPA can achieve near-optimal network utility performance.

	\begin{figure}
		
		\centering
		\includegraphics[width=1\linewidth]{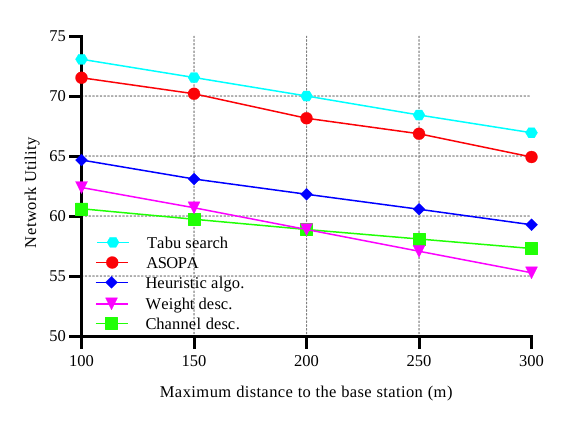}
		\caption{The network utility under different maximum distances to the BS when $N=10$.}
		\label{fig:performance_duibi_d_max}
	\end{figure}
	
	\begin{figure}
		
		\centering
		\includegraphics[width=1\linewidth]{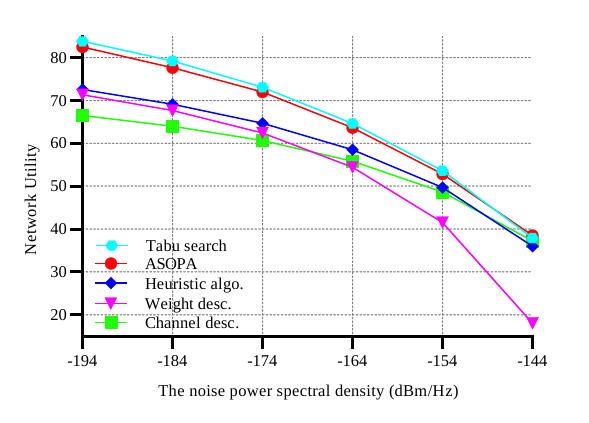}
		\caption{The network utility under different noise power spectral densities when $N=10$.}
		\label{fig:performance_duibi_noise}
	\end{figure}
	
	Fig.~\ref{fig:performance_duibi_d_max} shows the network utility under different maximum distances of users to the BS. The network utility achieved by all algorithms decreases slightly as the maximum distance increases. {Within the distance range $[100, 300]~ \text{m}$, ASOPA achieves performance close to that of Tabu search algorithm and outperforms the other three baseline algorithms by an average of 10\%. }
	
	Fig.~\ref{fig:performance_duibi_noise} demonstrates how the network utility varies with different levels of noise power spectral densities. As the noise power spectral density increases, the network utility of all algorithms decreases, and the difference between ASOPA and baseline algorithms diminishes. At a noise power spectral density of -144 dBm/Hz, ASOPA and the channel descending algorithm exhibit the slightest difference of 4.27\%. This result indicates that when the users' channel quality is inferior, the users' channel state significantly impacts the SIC ordering.

	\subsection{Execution Latency} \label{sec:latency}

	In order to meet the real-time requirement of NOMA networks, the execution time of the SIC ordering and power allocation algorithm need be much smaller than the slot duration, i.e., two seconds \cite{huang2019deep}. To evaluate the efficiency of ASOPA and baseline algorithms, we test the average execution time under different numbers of users, and the results are shown in Fig.~\ref{fig:speed} and Table.~\ref{Table:latency}. 
	\begin{figure}
		
		\centering
		\includegraphics[width=1\linewidth]{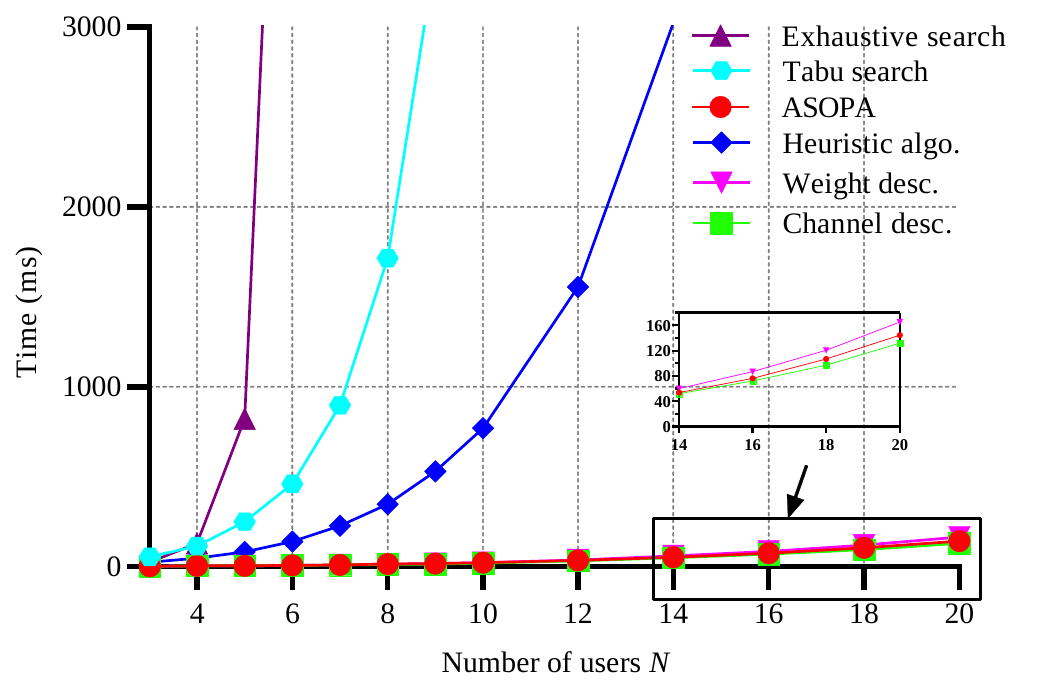}
		\caption{The average execution time under different numbers of users.}
		\label{fig:speed}
	\end{figure}
	
	\begin{table}
		\setlength{\abovecaptionskip}{0cm}
		\setlength{\belowcaptionskip}{0.1cm}
		\renewcommand{\arraystretch}{1.2}
		\caption{The average execution latency of different algorithms (ms)} 
		\centering 
		
		\label{Table:latency}
		\begin{tabular}{crrrrr} 
			\toprule 
			Algorithms & N=5 & N=8 & N=10&N=14&N=20\\
			\midrule
			Exhaustive search &823  &$741\,933$ & /& /& /\\ [2pt]
			Tabu search &252  &$1\,717$  &$5\,643$ &$45\,122$ & $492\,677$\\ [2pt]
			Meta-scheduling &84 &349 & 771 &$3\,025$ & $14\,737$  \\ [2pt]
			Weight des. &7  & 15 &25  & 60&165  \\ [2pt]
			Channel des. & 7 &14  & 23 &52 &132  \\ [2pt]
			\hline
			\specialrule{0em}{1pt}{1pt}
			ASOPA &8  & 16  &24  &53 & 152 \\
			\bottomrule 
		\end{tabular}
	\end{table}

	The execution time of ASOPA is close to that of the weight descending algorithm and the channel descending algorithm, i.e., 24 ms, 25 ms, and 23 ms for a ten-user NOMA network. The execution delay is scalable with the network size $N$ and is acceptable for field deployment. { ASOPA only takes 152 ms even for a twenty-user NOMA network.} { However, the execution latency of Meta-scheduling and Tabu saerch significantly increases with the network size $N$, consuming 771 ms and $5643$ ms for a ten-user NOMA network, respectively.} The execution time of exhaustive search is exponentially increasing with $N$. It takes 823 ms and $6630$ ms even for NOMA networks with five-user and six-user, respectively.
	{
	According to Fig.~\ref{fig:speed} and Table.~\ref{Table:latency}, Tabu search fails to cope with real-time execution when $N>7$, while Meta-scheduling fails at $N>10$. In contrast, ASOPA maintains the same latency as the static algorithm for all the number of users. In particular, is three orders of magnitude lower than Tabu search and two orders of magnitude lower than Meta-scheduling when \(N=20\).}
	
	ASOPA uses the actor network to generate the SIC ordering, whose time consumption is negligible. The primary time overhead of various algorithms comes from solving the power allocation problem by the interior-point method. {Exhaustive search solves the power allocation problem $N!$ times. Meta-scheduling solves the power allocation ${N(N+1)}/{2}$ times. Tabu search solves the power allocation ${IN(N+1)}/{2}$ times, where $I$ denotes the number of search iterations. ASOPA, the weight descending algorithm, and the channel descending algorithm solve the power allocation problem once.} Therefore, the proposed ASOPA executes efficiently like the static SIC ordering algorithms, while performing as well as the exhaustive search algorithm.
	
	{Regarding the training latency, ASOPA's policy update is conducted infrequently and in parallel with the inference process, as detailed in Algorithm~\ref{algorithm_train}. Extensive evaluations have shown that the duration of a single policy update is approximately one second when the number of users \(N\) is 10 and training batch size is 64. On average, the duration of the policy update process is less than 20 ms for each sample. Therefore, the policy update process of ASOPA can feasibly be executed online for NOMA networks, ensuring that the system remains up-to-date and responsive to changing network conditions.}

	{
		\subsection{Extension Scenario}		
		\begin{figure}
			
			\centering
			\includegraphics[width=1\linewidth]{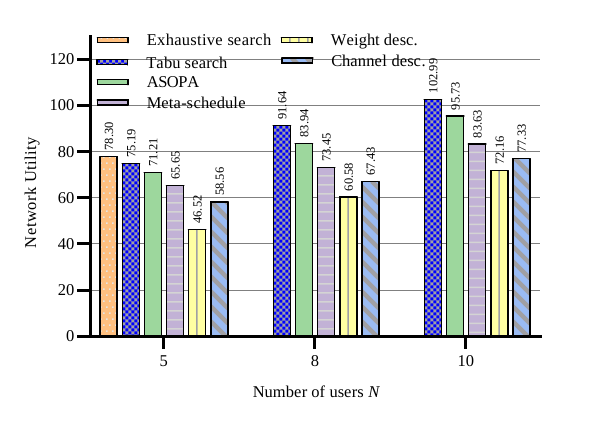}
			\caption{The network utility under different numbers of users for multiple-antenna NOMA networks.}
			\label{fig:MMSE}
		\end{figure}
		
		Fig.~\ref{fig:MMSE} presents the network utility achieved by different algorithms for varying numbers of users $N$ in multiple-antenna scenario. Specifically, we consider a NOMA network with two antennas at the BS. All algorithms use the minimum-mean-square-error (MMSE) \cite{ZhangJ2022} linear equalization to detect symbols.
		For $N=5$, the optimal network utility was calculated using exhaustive search. Remarkably, ASOPA achieves 90.94\% of the optimal network utility, with only a 5\% performance degradation compared to the Tabu search algorithm.
		Furthermore, ASOPA consistently outperforms the other three baseline algorithms across all settings, which agrees the observation in Fig.~\ref{fig:performance_duibi_users_num}.
		Due to the complexity of user state in multiple antenna scenarios, the performance of ASOPA can be further optimized in future work.
		
		\begin{figure}
			
			\centering
			\includegraphics[width=1\linewidth]{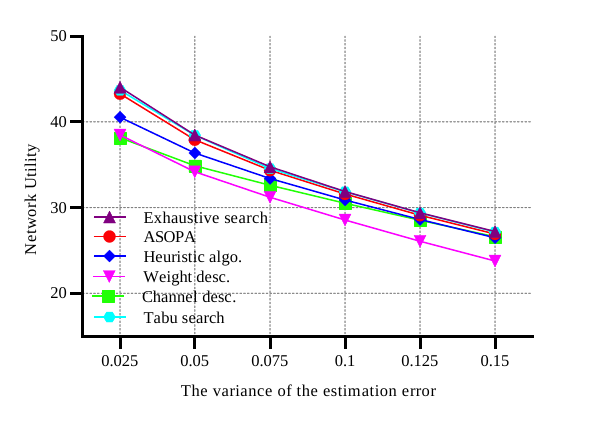}
			\caption{The network utility under different estimated errors when $N=5$.}
			\label{fig:imperfectCSI}
		\end{figure}

		
		Fig.~\ref{fig:imperfectCSI} shows how network utility varies with different variances of estimated error under imperfect CSI conditions with five users. { ASOPA consistently achieves over 98\% of the optimal performance obtained through exhaustive search and achieves 99\% performance of Tabu search.  Specifically, when \( \sigma _{{\epsilon_n}}^2 = 0.025\), ASOPA's network utility is 6.75\%, 12.49\%, 13.44\% higher than that of Meta-scheduling, channel descending and weight descending, respectively. }
		These results demonstrate ASOPA's robustness and effectiveness in scenarios with imperfect CSI, highlighting its ability to adapt to varying degrees of channel estimation errors.

		\begin{figure}
			\centering
			\includegraphics[width=1\linewidth]{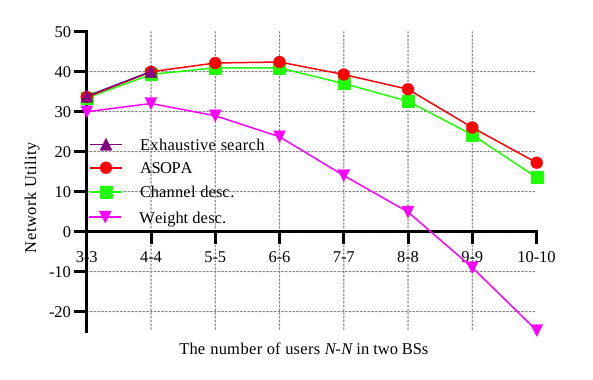}
			\caption{The network utility under different numbers of users in a dual-BS NOMA network.}
			\label{fig:multipleBS}
		\end{figure}
		Fig.~\ref{fig:multipleBS} presents the network utility achieved by various algorithms for different pairs of users $N_1$-$N_2$ in dual-BS NOMA networks. From the figure, it's evident that ASOPA performs comparably to the exhaustive search method and surpasses other benchmark algorithms. Notably, Tabu search in \cite{QianL2024} and Meta-scheduling proposed in \cite{qian2019optimal} is not applicable in this scenario and thus is not included in the comparison. Particularly, when each BS has three or four users, ASOPA achieves 99.45\% and 99.80\% of the optimal performance determined by exhaustive search, respectively. The network utility reaches its peak when each BS is serving six users. {This optimal network utilization can be attributed to the increasing inter-cell interference with the number of users and the logarithmic throughput function in network utility. As the number of users increases, although the sum throughput saturates, some users' weighted logarithmic throughput even becomes a small negative value (say~$-10$), which leads to a decline in network utility.} As the number of users per BS rises to ten, ASOPA's performance advantage becomes more pronounced, showing a 27.69\% higher network utility compared to the channel descending algorithm. These results highlight ASOPA's effectiveness in adapting to varying user densities in multi-BS NOMA networks.
	}

	\section{Conclusion} \label{sec:conclusion}
	This paper focuses on optimizing the sum-weighted logarithmic throughput in uplink NOMA-based wireless networks by jointly optimizing the SIC ordering and users' transmit powers. To tackle this problem, we propose the ASOPA framework, which innovatively combines DRL with optimization theory. Key to ASOPA's success is an attention-based actor network, trained via reinforcement learning, which effectively derives a near-optimal SIC ordering. Subsequently, this is complemented by the application of optimization techniques to allocate the optimal transmit power for users. Simulation results show that ASOPA can achieve near-optimal performance in a low execution latency. 
	A particularly noteworthy aspect of ASOPA is its extensibility; the framework is adept at solving a range of optimization challenges, particularly those that involve dynamic SIC orderings within the NOMA context.
	
	{
	Looking ahead, our aim is to evolve ASOPA for more complex scenarios, including developing a distributed framework for NOMA networks with multiple base stations, tackling the challenges of imperfect SIC decoding and integrating the QoS constraints in our framework. 
	}
	
	\appendices

	\section{Details of the Multi-head Attention Mechanism}\label{appendix_attention_layer_details}
	The input to the encoder is denoted as $\mathbf{X}$. It is first transformed into a $d_e$-dimensional space by a fully connected feed-forward ($\text{FF}_1$) layer. The output of the $\text{FF}_1$ layer is then fed into a multi-head attention (MHA) layer and a feed-forward ($\text{FF}_2$) layer in sequence. Therefore, the encoding process can be expressed as
	{
		\begin{eqnarray}
			\begin{split}
				&\hat{\mathbf{e}}_n ={\text {BN}}\bigg({\text {FF}}_1\big(\mathbf{x}_n\big)+{\text{MHA}}\Big({\text{FF}}_1\big(\mathbf{X}_1\big),...,{\text{FF}}_1\big(\mathbf{X}_n\big)\Big)\bigg)  \\
				&\mathbf{e}_n=\text{BN}\bigg(\hat{\mathbf{e}}_n+\text{FF}_2\big(\hat{\mathbf{e}}_n\big)\bigg).
			\end{split}
	\end{eqnarray}}

	For the MHA layer and the $\text{FF}_2$ layer, they have the residual connection (RC) and are followed by batch normalization (BN). The details of each layer are shown as follows.
	
	\subsection{Attention Mechanism}\label{appendix_attention_mechanism}
	We utilize the attention mechanism proposed in \cite{vaswani2017attention}.
	The attention mechanism computes a weighted sum of values, where the weight is determined by a compatibility function based on a query and a set of keys. The query, keys, and values are all embeddings. Specifically, we compute the query $\mathbf{q}_n$, the key $\mathbf{k}_n$, and the value $\mathbf{v}_n$ for each user $n$ by multiplying their respective embedding $\mathbf{e}_n$ with parameter matrices $\mathbf{W}^Q$, $\mathbf{W}^K$, and $\mathbf{W}^V$. These parameter matrices have sizes $d_k\times d_e$, $d_k\times d_e$, and $d_v\times d_e$, respectively, as	
	\begin{equation}
		\mathbf{q}_n=\mathbf{W}^{Q}\mathbf{e}_n,
		\mathbf{k}_n=\mathbf{W}^{K}\mathbf{e}_n,
		\mathbf{v}_n=\mathbf{W}^{V}\mathbf{e}_n.
	\end{equation}
	Then we compute the compatibility $u_{n,j}$ of user $n$'s query $\mathbf{q}_n$ with user $j$'s key $	\mathbf{k}_j$:
	\begin{equation}
		u_{n,j}=\frac{{\mathbf{q}_n}^T\mathbf{k}_j}{\sqrt{d_k}}.
	\end{equation}
	From $u_{n,j}$, we can compute the attention weight $a_{n,j}$ by a softmax function:
	\begin{equation}\label{equation_softmax}
		a_{n,j}=\frac{\exp\left(u_{n,j}\right)}{\sum_{j=1}^{N}\exp\left(u_{n,j}\right)}.
	\end{equation}
	Finally, we compute the sum of weighted keys to get the final message $\mathbf{e}_n'$:
	\begin{equation}
		\mathbf{e}_n'=\sum_{j=1}^{N} a_{n,j}\mathbf{v}_j.
	\end{equation}
	\subsection{Multi-head Attention}
	Multi-head attention uses $M$ groups of different parameters $\mathbf{W}_m^Q$, $\mathbf{W}_m^K$ and $\mathbf{W}_m^V$. We set $M=8$ and $d_k=d_v=\frac{d_e}{M}=16$, to get the messages, which are denoted as $\mathbf{e}_{n,m}',\forall m\in\{1,\dots,M\}$, and use $d_e\times d_v$ matrices $\mathbf{W}_m^A$ to change their size and then sum them up as the final message:
	\begin{equation}
		\text{MHA}_n\left(\mathbf{e}_1,\dots,\mathbf{e}_N\right)=\sum_{m=1}^{M}\mathbf{W}_m^A\mathbf{e}_{n,m}'.
	\end{equation}
	
	\subsection{Feed Forward Layer}
	There are two feed-forward layers in the encoder. The first $\text{FF}_1$  is just a fully connected layer with learnable parameters $\mathbf{W}_1$ and $\mathbf{b}_1$:
	\begin{equation}
		\text{FF}_1\left(\mathbf{x}_n\right)=\mathbf{W}_1\mathbf{x}_n+\mathbf{b}_1.
	\end{equation}
	And the  second feed-forward layer $\text{FF}_2$ consists of two fully connected layer and use a Relu activation after the first connected layer:
	\begin{equation}
		\text{FF}_2(\hat{\mathbf{e}}_n)=\mathbf{W}_{2,2}\text{Relu}\left(\mathbf{W}_{2,1}\hat{\mathbf{e}}_n+\mathbf{b}_{2,1}\right)+\mathbf{b}_{2,2},
	\end{equation}
	where $\hat{\mathbf{e}}_n$ is the input for $\text{FF}_2$, $\mathbf{W}_{2,1}$ and $\mathbf{b}_{2,1}$ are the parameter matrix and bias of the first fully connected layer, respectively, and $\mathbf{W}_{2,2}$ and $\mathbf{b}_{2,2}$ are the one of the second layer, respectively.
	
	\subsection{Batch Normalization }
	We use batch normalization shown in \cite{kool2018attention}:
	\begin{equation}
		\text{BN}\left(\mathbf{e}_n\right)=\mathbf{w}^{\text{bn}}\odot\overline{\text{BN}}\left(\mathbf{e}_n\right)+\mathbf{b}^{\text{bn}},
	\end{equation}
	where $\mathbf{w}^{\text{bn}}$ and $\mathbf{b}^{\text{bn}}$ are learnable $d_e$-dimensional affine parameters, $\odot$ denotes the element-wise product, and $\overline{\text{BN}}$ refers to batch normalization without affine transformation.

	{
		\section{Convex transformation and proof of \textbf{P1}}\label{appendix_proof_of_P2_is_convex}
	}	
	{{
			For user's transmit power $p_n>0,\forall n \in\mathcal{N}$, let $p_n=e^{y_n}, \forall n \in\mathcal{N}$. We introduce an auxiliary variable $\nu_n$ for user $n$ and add a constraint to guarantee that the weighted logarithmic throughput of user $n$ is not less than ${\nu_n}$. Then the power allocation sub-problem \textbf{P1} can be transformed into the following convex problem
			\begin{subequations}
				\begin{align}
					\textbf{P2}:&\max\limits_{\bm{\nu,y}}\ \sum\limits_{n=1}^{N}{\nu_n}\label{P2_objective}\\
					s.t.\ &e^{y_n}\le P_n^{max},\label{P2_constraint_power}\\
					&w_n\ln{\log_2\left(1+\frac{e^{y_n}g_n}{\sum\limits_{\xi\left(n'\right)>\xi\left(n\right),\forall n' \in\mathcal{N}}{e^{y_{n'}}g_{n'}}+N_0}\right)}\ge {\nu_n}, \label{P2_constrainsuzhu} \\
					&\forall n \in\mathcal{N}, \notag
				\end{align}
			\end{subequations}
			where $\bm{\nu} = [\nu_1,\nu_2,...,\nu_N]$ and $\bm{y} = [y_1,y_2,...,y_N]$.}
		It's easy to know that \eqref{P2_objective} and \eqref{P2_constraint_power} are convex. Next, we will show that \eqref{P2_constrainsuzhu} is convex.
		First we convert \eqref{P2_constrainsuzhu} as 
		\begin{equation}
			\frac{e^{y_n}g_n}{\sum\limits_{\xi\left(n'\right)>\xi\left(n\right),\forall n' \in\mathcal{N}}{e^{y_{n'}}g_{n'}}+N_0} \ge 2^{e^{\frac{\nu_n}{w_n}}}-1.
		\end{equation}
		Then we take the reciprocal of both sides and take the natural logarithm of both sides, so we can get
		\begin{equation}
			\ln\left(\frac{\sum\limits_{\xi\left(n'\right)>\xi\left(n\right),\forall n' \in\mathcal{N}}{e^{y_{n'}}g_{n'}}+N_0}{e^{y_n}g_ n}\right)+ \ln\left(2^{e^{\frac{\nu_n}{w_n}}}-1\right)\le 0.
		\end{equation}
		The first term in the left-hand-side (LHS) is a log-sum-exp function which is convex [52].
		The second-order derivative of the second term in the LHS is 
		\begin{equation}
			\label{second_term_LHS}
			\frac{\ln 2}{{w_n}^2}e^{\frac{\nu_n}{w_n}}2^{e^{\frac{\nu_n}{w_n}}}\left(2^{e^{\frac{\nu_n}{w_n}}}-\ln 2e^{\frac{\nu_n}{w_n}}-1\right),
		\end{equation}
		{
			whose value is non-negative. Since the first order derivative of the term inside brackets in (\ref{second_term_LHS}) is $\frac{\ln 2}{{{w}_{n}}}{{e}^{\frac{{{\nu }_{n}}}{{{w}_{n}}}}}\left( {{2}^{{{e}^{\frac{{{\nu }_{n}}}{{{w}_{n}}}}}}}-1 \right)$, which is positive due to ${{\nu }_{n}}>0,{{w}_{n}}>0$. Thus, the minimum of (\ref{second_term_LHS}) is $\ln{\frac{e}{2}}$ larger than zero, and the second term in the LHS is also convex. Therefore, \eqref{P2_constrainsuzhu} is convex. For \eqref{P2_objective}$\sim$\eqref{P2_constrainsuzhu} are convex, the problem \textbf{P2} is convex. The proof is completed. }

		{
			
			\section{Probabilistic problem Transformation}\label{Transformation_imperfectCSI}
			In the imperfect CSI scenario, the outage probability requirement turns the problem into an intractable non-convex probability mixed problem. Following \cite{NgDWK2012,ZhangH2020}, we transform this problem into a non-probability problem by approximations.
			
			Firstly, we transform the outage probabilistic requirement into another form of probabilistic constraint as follows.
			The maximum achievable data rate is rewritten as
			\begin{align}
				c_n &=  W\log_2(1+{\phi}_n) \notag\\
				&= W\log_2\left(1+\frac{c^S_n}{c^I_n}\right),
			\end{align}
			where $c^S_n = {{p_n}|\alpha_n|^2{{\overline g}_n}}$ and $c^{I}_n=\!\!\!\!\!\!\!{\sum\limits_{\xi \left( {n'} \right) > \xi \left( n \right),\forall n' \in {\cal N}} \!\!\!\!\!\!\!\!\!\!\!\!\!\!{{p_{n'}}|\alpha_n|^2{{\overline g}_{n'}}}  + {N_0}}$. The scheduled data rate can be rewritten as
			\begin{align}
				r_n &=  W\log_2(1+{\hat \phi}_n)\notag\\
				&= W\log_2\left(1+\frac{{b}^S_n}{{b}^I_n}\right),
			\end{align}
			and we have 
			\begin{align}
				{\hat \phi}_n = \frac{{b}^S_n}{{b}^I_n} = 2^{\frac{r_n}{W}}-1. \label{Eqn:SINR}
			\end{align}
			
			According to the above transformation and the total probability theorem, the outage probability constraints can be transformed as
			\begin{align}
				\Pr[c_n<r_n|{\hat \alpha}_n] =& \Pr\left[{\phi}_n<{\hat \phi}_n| {\hat \alpha}_n\right] \notag\\
				=&\Pr\left[\frac{c^S_n}{c^I_n}<2^{\frac{r_n}{W}}-1| {\hat \alpha}_n\right] \notag\\
				=& \Pr[E1]\cdot\Pr\left[c_n^S\le {b}^S_n | {\hat \alpha}_n\right] \notag\\
				&+  \Pr[E2]\cdot\Pr\left[c_n^S> {b}^S_n | {\hat \alpha}_n\right] \le \epsilon_{out}, \label{ProbConstraint}
			\end{align}
			where $\Pr[E_1] = \Pr\left[\frac{c^S_n}{c^I_n}<2^{\frac{r_n}{W}}-1|c_n^S\le {b}^S_n ,{\hat \alpha}_n\right]$ and $\Pr[E_2] = \Pr\left[\frac{c^S_n}{c^I_n}<2^{\frac{r_n}{W}}-1|c_n^S> {b}^S_n ,{\hat \alpha}_n\right]$. Then, we have the following theorem.
			
			\begin{theorem}
				\label{Theo:ProbConstraintApprox}
				Following \cite{ZhangH2020}, the outage probability constraint (\ref{ProbConstraint}) can be approximated as 
				\begin{align}
					\Pr\left[c_n^I \ge {b}^I_n | {\hat \alpha}_n\right] \le \epsilon_{out}/2, \label{Eqn:Approx1}
				\end{align}
				and 
				\begin{align}
					\Pr\left[c_n^S \le {b}^S_n | {\hat \alpha}_n\right] = \epsilon_{out}/2. \label{Eqn:Approx2}
				\end{align}
			\end{theorem}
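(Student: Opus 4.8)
The plan is to derive the two approximate constraints \eqref{Eqn:Approx1} and \eqref{Eqn:Approx2} from the exact total‑probability decomposition \eqref{ProbConstraint} in two stages: first replace the two conditional outage factors $\Pr[E_1]$ and $\Pr[E_2]$ by worst‑case event bounds, so that the outage probability is upper‑bounded by a sum of a ``signal‑shortfall'' probability and an ``interference‑excess'' probability; then enforce $\Pr[\text{outage}\mid\hat\alpha_n]\le\epsilon_{out}$ by splitting the budget evenly, assigning $\epsilon_{out}/2$ to each term. All probabilities are taken conditional on the estimate $\hat\alpha_n$, and since the involved quantities are continuous I would freely interchange strict and non‑strict inequalities.

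For the event bounds I would argue as follows. Conditioned on $c_n^S\le b_n^S$, I would conservatively approximate $\Pr[E_1]\approx 1$, i.e.\ treat a realized signal power below the scheduled level as always causing an outage; this upper‑bounds the first summand of \eqref{ProbConstraint} by $\Pr[c_n^S\le b_n^S\mid\hat\alpha_n]$. For $E_2$ the key is an event inclusion: on $\{c_n^S>b_n^S\}$, the outage inequality $c_n^S/c_n^I<b_n^S/b_n^I$ forces $c_n^I>(c_n^S/b_n^S)\,b_n^I>b_n^I$, so $\{\text{outage}\}\cap\{c_n^S>b_n^S\}\subseteq\{c_n^I\ge b_n^I\}$. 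Hence the second summand $\Pr[E_2]\Pr[c_n^S>b_n^S\mid\hat\alpha_n]$ is bounded by $\Pr[c_n^I\ge b_n^I\mid\hat\alpha_n]$. Combining the two yields the union‑type bound $\Pr[\text{outage}\mid\hat\alpha_n]\le\Pr[c_n^S\le b_n^S\mid\hat\alpha_n]+\Pr[c_n^I\ge b_n^I\mid\hat\alpha_n]$.

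I would then impose sufficient conditions making the right‑hand side not exceed $\epsilon_{out}$ by splitting the budget as $\epsilon_{out}/2+\epsilon_{out}/2$. For the signal term I would set the scheduled level $b_n^S$ to the value at which the true signal falls short with probability exactly $\epsilon_{out}/2$; taking \emph{equality} rather than a loose inequality is what keeps the scheme least conservative, since scheduling any lower $b_n^S$ would needlessly waste achievable rate, and this pins $b_n^S$ to the $(\epsilon_{out}/2)$‑quantile of $c_n^S$ given $\hat\alpha_n$, establishing \eqref{Eqn:Approx2}. Because $\alpha_n=\hat\alpha_n+\epsilon_n$ with $\epsilon_n\sim\mathcal{CN}(0,\sigma_\epsilon^2)$, the quantity $|\alpha_n|^2$ conditioned on $\hat\alpha_n$ is, up to scaling, noncentral chi‑square with two degrees of freedom, so this quantile is exactly the inverse‑CDF term $F^{-1}_{|g_n|^2}(\epsilon_{out}/2)$ appearing in the numerator of the transformed SINR $\tilde\phi_n$. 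The remaining half‑budget constrains the scheduled interference level through $\Pr[c_n^I\ge b_n^I\mid\hat\alpha_n]\le\epsilon_{out}/2$, i.e.\ \eqref{Eqn:Approx1}, which I keep as an \emph{inequality} because $b_n^I$ aggregates the interferers' transmit powers $p_{n'}$ that remain the optimization variables; bounding this probability by a second‑moment/Markov‑type argument produces the $2(|\hat g_{n'}|^2+\sigma_\epsilon^2)$ factors in the denominator of $\tilde\phi_n$.

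The main obstacle I anticipate is justifying the two worst‑case steps rather than merely invoking them: both $\Pr[E_1]\approx 1$ and the inclusion $E_2\Rightarrow\{c_n^I\ge b_n^I\}$ are conservative, so I would need to argue their slack is small, e.g.\ that on $\{c_n^S>b_n^S\}$ the ratio $c_n^S/b_n^S$ is close to one, making $\{c_n^I\ge b_n^I\}$ a tight proxy for $E_2$, so that the approximated feasible region stays close to the true one. A secondary technical point is the careful handling of the conditioning on $\hat\alpha_n$ and of the correlation between $c_n^S$ and $c_n^I$ when converting the quantile condition into the closed‑form noncentral chi‑square expression; this is precisely where the stated reduction to $F^{-1}_{|g_n|^2}(\cdot)$ must be verified.
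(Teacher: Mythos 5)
Your proposal is correct and follows the same skeleton as the paper's own proof: the identical total-probability decomposition over $\{c_n^S \le b_n^S\}$ and $\{c_n^S > b_n^S\}$, the worst-case bound $\Pr[E_1]\le 1$, the use of $b_n^S/b_n^I = 2^{r_n/W}-1$ to convert outage on the second branch into the interference-excess event, and the even $\epsilon_{out}/2$ budget split with equality reserved for the signal quantile (which later yields the noncentral chi-square inverse CDF) and inequality kept for the interference term (later handled by Markov's inequality). The one point where you genuinely depart from the paper is the treatment of the second branch, and your version is the cleaner one. The paper asserts $\Pr[E_2]\le\epsilon_{out}/2$ directly from (\ref{Eqn:Approx1}) and then multiplies by $\Pr[c_n^S>b_n^S\mid{\hat \alpha}_n]=1-\epsilon_{out}/2$, arriving at $\epsilon_{out}-\epsilon_{out}^2/4$ and invoking $\epsilon_{out}\ll1$; but that step silently replaces the conditional probability $\Pr[c_n^I\ge b_n^I \mid c_n^S>b_n^S,{\hat \alpha}_n]$ by the unconditional one, which is not harmless here because, given ${\hat \alpha}_n$, both $c_n^S$ and $c_n^I$ are increasing functions of the same random variable $|\alpha_n|^2$ and hence positively dependent, so the conditional can exceed $\epsilon_{out}/2$. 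Your bookkeeping --- bounding the joint probability $\Pr[E_2]\cdot\Pr[c_n^S>b_n^S\mid{\hat \alpha}_n]=\Pr[c_n<r_n,\,c_n^S>b_n^S\mid{\hat \alpha}_n]$ via the event inclusion $\{c_n<r_n\}\cap\{c_n^S>b_n^S\}\subseteq\{c_n^I\ge b_n^I\}$ --- requires no independence assumption and gives exactly $\epsilon_{out}/2+\epsilon_{out}/2=\epsilon_{out}$ with no asymptotic approximation. Two minor remarks: the ``tightness'' obstacle you anticipate is not actually needed, since the theorem only claims that (\ref{Eqn:Approx1}) and (\ref{Eqn:Approx2}) are sufficient for (\ref{ProbConstraint}) (the paper offers no tightness analysis beyond noting $\epsilon_{out}-\epsilon_{out}^2/4\approx\epsilon_{out}$); and the quantile-to-chi-square and Markov-inequality computations you sketch belong to the transformation that follows the theorem, namely (\ref{Eqn:Markov})--(\ref{Eqn:inverseChiSquare}), rather than to the proof of the theorem itself.
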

			\begin{proof}
				According to (23), we have
				\begin{align}
					&\Pr\left[c_n^I \ge {b}^I_n | {\hat \alpha}_n\right]  \notag\\
					&\ \ \ \ \ \ \ \ \ \ = \Pr\left[c_n^I \ge {b}^S_n/(2^{\frac{r_n}{W}}-1)| {\hat \alpha}_n \right]\notag\\
					&\ \ \ \ \ \ \ \ \ \ =  \Pr\left[\frac{{b}^S_n}{c_n^I} \le 2^{\frac{r_n}{W}}-1 | {\hat \alpha}_n \right] \le \epsilon_{out}/2, 
				\end{align}
				and when $c_n^S> {b}^S_n$, we can always have 
				\begin{align}
					\Pr\left[E_2\right]=\Pr\left[\frac{c^S_n}{c^I_n}<2^{\frac{r_n}{W}}-1|{\hat \alpha}_n\right] \le \epsilon_{out}/2. \label{PrE2}
				\end{align}
				According to (24), we have 
				\begin{align}
					\Pr\left[ c_n^S > {b}^S_n | {\hat \alpha}_n \right] = 1-\epsilon_{out}/2. \label{Prc_g_a}
				\end{align}
				Based on (\ref{PrE2}) and (\ref{Prc_g_a}), the probabilistic constraint (\ref{ProbConstraint}) satisfies the following approximations
				\begin{align}
					&\Pr[c_n<r_n|{\hat \alpha}_n] \notag\\
					=& \Pr[E1]\cdot\Pr\left[c_n^S\le {b}^S_n | {\hat \alpha}_n\right] +  \Pr[E2]\cdot\Pr\left[c_n^S> {b}^S_n | {\hat \alpha}_n\right] \notag\\
					\le& \epsilon_{out}/2 + (\epsilon_{out}/2)(1-\epsilon_{out}/2) = \epsilon_{out} - \epsilon_{out}^2/4.
				\end{align}
				For $\epsilon_{out} \ll 1$, we have $\epsilon_{out} - \epsilon_{out}^2/4 \approx \epsilon_{out}$. Therefore,  the probabilistic constraint (\ref{ProbConstraint}) can be approximated as (\ref{Eqn:Approx1}) and (\ref{Eqn:Approx2}). This completes the proof.
			\end{proof}
			
			Secondly, based on the transformed probabilistic constraints (\ref{Eqn:Approx1}) and (\ref{Eqn:Approx2}) of Theorem~\ref{Theo:ProbConstraintApprox}, the probabilistic mixed problem can be further transformed to a non-probabilistic problem as follows. 
			
			According to the Markov inequality, the LHS of (\ref{Eqn:Approx1}) can derive as follows \cite{NgDWK2012}
			\begin{align}
				&\Pr\left[c_n^I \ge {b}^I_n | {\hat \alpha}_n\right] \notag\\
				&\ \ \ \ \ \ = \Pr\left[{\sum\limits_{\xi \left( {n'} \right) > \xi \left( n \right),\forall n' \in {\cal N}} \!\!\!\!\!\!\!\!\!\!\!\!\!\!{{p_{n'}}|\alpha_n|^2{{\overline g}_{n'}}}  + {N_0}}\ge {b}^I_n | {\hat \alpha}_n \right] \notag\\
				&\ \ \ \ \ \ \le \frac{E\left[{\sum\limits_{\xi \left( {n'} \right) > \xi \left( n \right),\forall n' \in {\cal N}} {{p_{n'}}|\alpha_n|^2{{\overline g}_{n'}}}}\right]}{{b}^I_n-N_0}\notag\\
				&\ \ \ \ \ \ = \frac{{\sum\limits_{\xi \left( {n'} \right) > \xi \left( n \right),\forall n' \in {\cal N}} {{p_{n'}}|\alpha_n|^2{{\overline g}_{n'}}}}}{{b}^I_n-N_0} = \epsilon_{out}/2, \label{Eqn:Markov}
			\end{align}
			where the right side of the Markov inequality is set to $\epsilon_{out}/2$ according to (\ref{Eqn:Approx1}).
			
			Since $\left| {\alpha}_n^2 \right|$ is a non-central chi-squared distributed random variable with two degrees of freedom, the LHS of (\ref{Eqn:Approx2}) can be rewritten as
			\begin{align}
				&\Pr\left[c_n^S \le {b}^S_n | {\hat \alpha}_n\right] \notag\\
				&\ \ \ \ \ = \Pr\left[{p_n}|\alpha_n|^2{{\overline g}_n}\le {b}^S_n | {\hat \alpha}_n \right] \notag\\
				&\ \ \ \ \ =\Pr\left[ {\left|\alpha_n\right|}^2 \le \frac{{b}^S_n}{p_n{\overline g}_n} | {\hat \alpha}_n \right] \notag\\
				&\ \ \ \ \ = F_{{\left|\alpha_n\right|}^2}\left(\frac{{b}^S_n}{p_n{\overline g}_n}\right) \notag \\
				& \ \ \ \ \ = 1- Q_1\left(\sqrt{\frac{2\left|{\hat \alpha}_n\right|^2}{\sigma_{\epsilon}^2}},\sqrt{\frac{2}{\sigma_\epsilon}\frac{{b}^S_n}{p_n{\overline g}_n}}\right) \label{Eqn:chisquare}
			\end{align}
			where $F(\cdot)$ denotes a cumulative distribution function (cdf) of a non-central chi-square random variable with non-centrality parameter ${2\left|{\hat \alpha}_n\right|^2}/{\sigma_{\epsilon}^2}$, and $Q_1(\cdot)$ is the first-order Marcum Q-function. Based on (\ref{Eqn:Approx2}), (\ref{Eqn:chisquare}) is equal to $\epsilon_{out}/2$, and ${b}^S_n$ can be expressed as
			\begin{align}
				{b}^S_n = F^{-1}_{{\left|\alpha_n\right|}^2}(\epsilon/2)\cdot p_n {\overline g}_n, \label{Eqn:inverseChiSquare}
			\end{align}
			where $F^{-1}(\cdot)$ is the inverse non-central chi-square cdf of $F(\cdot)$.
			Based on (\ref{Eqn:SINR}), (\ref{Eqn:inverseChiSquare}) and $|\alpha_n|^2 = |{\hat \alpha}_n|^2 + \sigma_{\epsilon}^2$ , (\ref{Eqn:Markov}) can be further transformed into
			\begin{align}
				&\frac{{\sum\limits_{\xi \left( {n'} \right) > \xi \left( n \right),\forall n' \in {\cal N}} {{p_{n'}}|\alpha_n|^2{{\overline g}_{n'}}}}}{b_n^{S}/(2^{\frac{r_n}{W}}-1)-N_0} \notag\\
				&\ \ \ \ \ = \frac{{\sum\limits_{\xi \left( {n'} \right) > \xi \left( n \right),\forall n' \in {\cal N}} {{p_{n'}}\left(|{\hat \alpha}_n|^2+\sigma^2_{\epsilon}\right){{\overline g}_{n'}}}}}{\frac{F^{-1}_{{|\alpha_n|}^2}\left(\epsilon_{out}/2\right)\cdot p_n{\overline g}_{n}}{2^{\frac{r_n}{W}}-1}-N_0} = \frac{\epsilon_{out}}{2}.
			\end{align}
			Therefore, considering the outage probability constraint, the approximated signal-to-interference-plus-noise ratio (SINR) ${\tilde \phi}_n$ for the $n$-th user can be derived as
			\begin{align}
				{\tilde \phi}_n = \frac{\epsilon_{out}{F^{-1}_{{|\alpha_n|}^2}\left(\epsilon_{out}/2\right)\cdot p_n{\overline g}_{n}}}{\epsilon_{out}N_0+2{\sum\limits_{\xi \left( {n'} \right) > \xi \left( n \right),\forall n' \in {\cal N}} {{p_{n'}}\left(|{\hat \alpha}_n|^2+\sigma^2_{\epsilon}\right){{\overline g}_{n'}}}}},
			\end{align}
			and the corresponding data rate can be written as
			\begin{align}
				\tilde{r}_n = W\log2(1+{\tilde \phi}_n).
			\end{align}
			
			Finally, the weighted proportional fairness function with outage probability is transformed into the following non-probability optimization problem
			\begin{subequations}
				\label{TransformedProblem}
				\begin{align}
					\max\limits_{\mathbf{\bm{\pi}},\textbf{p}}\ &\sum\limits_{n=1}^{N}{w_n\ln(1-\epsilon_{out}) {\tilde r}_n} \notag\\
					s.t.\ &0<p_n\le P_n^{max},\forall n \in\mathcal{N},\notag\\
					&\bm{\pi}\in\Pi. \notag
				\end{align}
			\end{subequations}

		}
		
		{
			\section{Alternative Algorithm for Multiple-antenna with MMSE equalization matrices}
			Under MMSE methods, the equalization matrices involving transmit power variable $\bf p$ turn the power allocation subproblem into an intractable non-convex problem. To tackle this non-convex problem, we utilize the alternative algorithm to further transform it into the following subproblem: the calculation of ${\bf V}$ under given $\bf p$ and the optimization of $\bf p$ under given ${\bf V}$.
			The specific process of the alternative algorithm is as follows.
			
			Firstly, we initiate the transmit power ${\bf p}$.
			
			Secondly, according to the definition, the equalization matrices ${\bf V}$ under the MMSE method can be easily calculated by the given $\bf p$ as
			\begin{equation}
				{\bf V} = {\bf{P}}{{\bf{H}}^H}{{({{\bf{H}}}{\bf{P}}{\bf{H}}^H + \sigma {\bf{I}})}^{ - 1}}
			\end{equation}
			
			Thirdly, obtained ${\bf V}$, the power allocation subproblem can be formulated as
			\begin{subequations}
				\label{Eqn:MMSEproblem}
				\begin{align}
					\textbf{P1}:R(\bm{\pi})=\max\limits_{\textbf{p}}\ &\sum\limits_{n=1}^{N}{w_n\ln{R_n}}\label{P1_objective}\\
					s.t.\ &0< p_n\le P_n^{max},\forall n \in\mathcal{N}\label{P1_constraint_power}.
				\end{align}
			\end{subequations}
			where $R_n$ is
			\begin{equation}
				{R_n}\!= \!{\log _2}\!\left(\!\! {1 \!+ \!\frac{{{{\left| {{{\bf{v}}_n}{{\bf{h}}_n}} \right|}^2}{p_n}}}{{\sum\limits_{\xi\left(n'\right)>\xi\left(n\right),\forall n' \in\mathcal{N}}\!\!\!\!\!\!\!\!\!\! {{{\left| {{{\bf{v}}_n}{{\bf{h}}_{n'}}} \right|}^2}{p_{n'}} + } {{\left| {{{\bf{v}}_{n}}} \right|}^2}{\sigma ^2}}}}\!\! \right)\!.
			\end{equation}
			${\bf v}_n$ denotes the $n$-row of the obtained ${\bf V}$. Since ${\bf v}_n$ is a constant under a given ${\bf V}$, (\ref{Eqn:MMSEproblem}) can be transformed into a convex problem as Appendix.~\ref{appendix_proof_of_P2_is_convex} and then solved by CVX solver.
			
			The alternative algorithm repeat the second and third steps above until the gap between the previous iteration's ${\bf p}$ and the current iteration's ${\bf p}$ is less than the threshold value.

		}

		{
			\section{Multiple-BS Scenario}\label{appendix_multipleBSs}
			We consider a uplink NOMA network consisting of a set of BSs $\mathbf{B}$ and each BS $b$ is associated with $N_b$ users. A BS simultaneously receives signal from its associated users and the other users, and iteratively decodes signal via a SIC ordering. In the SIC process, the remaining undecoded signal and the other users’ signal are treated as interference. Therefore, the SINR between user $n$ and BS $b$ can be expressed as
			\begin{align}
				{\phi^{(b)} _{n}} = \frac{{{p}^{(b)}_{n}}{g^{(b)}_{n}}}{{\sum\limits_{\!\!\!\!\!\!\xi \left( {n'} \right) > \xi \left( n \right),\forall n' \in {\cal N}_{b}} {{p^{(b)}_{n'}}{g^{(b)}_{n'}}}  + \sum\limits_{b' \in \mathbf{B}\setminus b,\forall m \in {\cal N}_b} {{p^{(b')}_{m}}{h^{(b')}_{m,b}}}  + {N_0}}}. \notag
			\end{align} 
			Then, we have the data rate between user $n$ associated with BS $b$,
			\begin{align}
				{R^{(b)}_{n}} = {\log _2}(1+\phi^{(b)}_{n}).
			\end{align}
			Therefore, the joint SIC ordering and power allocation optimization problem for multiple-BS NOMA can be expressed as:
			\begin{subequations}
				\begin{align}
					&\mathop {\max }\limits_{{\bf{\pi }},{\bf{p}}} \sum\limits_{b \in {\bf{B}}} {\sum\limits_{n = 1}^{N_b} {{w^{(b)}_{n}}\ln {R^{(b)}_{n}}} } \notag \\
					s.t.\ \ &0 < {p^{(b)}_{n}} \le P_{n,max}^{{(b)}},\forall n \in N_b,\forall b \in {\bf{B}} \\
					&\pi  \in {\Pi _ {\mathbf{B}} } 
				\end{align}
			\end{subequations}%
			where ${w^{(b)}_{n}}$ is the weight of user $n$ associated with BS $b$ and ${\bf{\pi }} = \left[\{\pi^{(b)}_{n}\}_{n\in\mathcal{N}_b}| b \in \bf{B} \right]$ indicates the SIC order. Here ${\pi^{(b)}_{i}} = n $  means that the $i$-th decoded user in BS $b$ is user $n$ and  ${\Pi _{\mathbf{B}}}$ is the permutation set of all possible SIC orderings.	
		}

	\end{document}